\documentclass[11pt,fullpage]{article}
\usepackage{theapa, rawfonts}


\usepackage[total={6.25in, 8.25in}]{geometry}
\usepackage{url}
\usepackage{tikz}
\usepackage{amsfonts}
\usepackage{graphicx,color}
\usepackage{xspace}
\usepackage{enumerate}
\usepackage{amssymb,amsmath}
\usepackage{amsthm}
\usepackage{caption}
\usepackage{subcaption}
\usepackage{marvosym}

\newtheorem{theorem}{Theorem}
\newtheorem{example}{Example}

\newtheorem{corollary}[theorem]{Corollary}

\newtheorem{lemma}[theorem]{Lemma}

\newtheorem{remark}[theorem]{Remark}

\usepackage{algorithmicx}
\usepackage{algorithm}

\usepackage{algpascal}
\usepackage{algc}
\usepackage[compatible]{algpseudocode}
\newcommand{\multilineee}[1]{%
	\begin{tabularx}{\dimexpr\linewidth-\ALG@thistlm}[t]{@{}X@{}}
		#1
	\end{tabularx}
}
\newcommand{\ngr}{$\mathit{Group}$\xspace}
\newcommand{\eb}{$\mathit{Empty}$\xspace}
\newcommand{\neb}{$\mathit{NonEmpty}$\xspace}

\algnewcommand\algorithmicforeach{\textbf{for each}}
\algdef{S}[FOR]{ForEach}[1]{\algorithmicforeach\ #1\ \algorithmicdo}

\algnewcommand{\Commentak}[1]{\Statex{\vspace{1mm}}\Comment{#1}}
\definecolor{cmntscolor}{rgb}{0.6, 0.6, 0.6}

\input{jairFigs.tex}

\usepackage{times}

\usepackage{soul}
\usepackage{url}
\usepackage[utf8]{inputenc}
\usepackage{caption}
\usepackage{graphicx}
\usepackage{amsmath}

\urlstyle{same}

\newcommand{\pp}{\textsc{ProfilePacking}\xspace}

\newcommand{\opt}{\textsc{Opt}\xspace}
\newcommand{\nextfit}{\textsc{NextFit}\xspace}
\newcommand{\firstfit}{\textsc{FirstFit}\xspace}
\newcommand{\firstfitdec}{\textsc{FirstFitDecreasing}\xspace}
\newcommand{\bestfit}{\textsc{BestFit}\xspace}

\newcommand{\hybrid}{\textsc{Hybrid}($\lambda$)\xspace}
\newcommand{\haware}{\textsc{$H$-Aware}\xspace}
\newcommand{\cnt}{${\tt count}(x)$\xspace}
\newcommand{\ppcnt}{${\tt ppcount}(x)$\xspace}
\newcommand{\adaptive}{\textsc{Adaptive}($w$)\xspace}
\newcommand{\remove}[1]{}

\usepackage{bm}
\usepackage{xspace}
\usepackage{xr}
\externaldocument{appendix-main}
\usepackage[switch]{lineno}
\usepackage{amsmath,amssymb,amsthm,xspace}
\usepackage{mathtools}
\usepackage{xcolor}

\usepackage{xcolor}         
\usepackage{caption}
\usepackage{subcaption}

\begin{document}

\title{Online Bin Packing with Predictions\thanks{A preliminary version of this work appeared in the Proceedings of the 31st International Joint Conference on Artificial Intelligence~\cite{ijcaiKS22}.}}

\author{%
	Spyros Angelopoulos\thanks{Sorbonne Université, CNRS, LIP6, Paris, France}
	\and
	Shahin Kamali\thanks{Department of Electrical Engineering and Computer Science, York University, Toronto, Canada}
	\and
Kimia Shadkami \thanks{Department of Computer Science, University of Manitoba, Winnipeg, Canada}}


\date{}

\maketitle

\begin{abstract}
	Bin packing is a classic optimization problem with a wide range of applications, from load balancing to supply chain management. In this work, we study the online variant of the problem, in which a sequence of items of various sizes must be placed into a minimum number of bins of uniform capacity. The online algorithm is enhanced with a (potentially erroneous) {\em prediction} concerning the frequency of item sizes in the sequence. We design and analyze online algorithms with efficient tradeoffs between the consistency (i.e., the competitive ratio assuming no prediction error) and the robustness (i.e., the competitive ratio under adversarial error), and whose performance degrades near-optimally as a function of the prediction error. This is the first theoretical and experimental study of online bin packing under competitive analysis, in the realistic setting of learnable predictions. Previous work addressed only extreme cases with respect to the prediction error, and relied on overly powerful and error-free oracles. 
\end{abstract}

\section{Introduction}
\label{sec:introduction}
Bin packing is a classic optimization problem and one of the original NP-hard problems~\cite{garey1979computers}. 
Given a set of {\em items}, each with a (positive) {\em size}, and a bin {\em capacity}, 
the objective is to assign the items to the minimum number of bins so that the sum of item sizes in each bin does not exceed the bin capacity. Bin packing is instrumental in modelling resource allocation problems such as load balancing and scheduling~\cite{10.5555/241938.241940}, and has many applications in areas such as supply chain management, e.g., capacity planning in logistics and cutting stock, but also in cloud computing, where a cloud provider must decide how many physical machines are needed in order to accommodate the incoming jobs~\cite{cohen2019overcommitment}. Several approximation algorithms have been proposed, e.g.,~\cite{DBLP:journals/combinatorica/VegaL81,rothvoss2013approximating,HobergR17}, and efficient exact heuristics motivated by AI settings are often used in practice~\cite{Korf02,Korf03,fukunaga2007bin,SchreiberK13}.

In this work, we focus on the {\em online} variant of bin packing, in which the set of items is not known in advance but is rather revealed in the form of a {\em sequence}. Upon the arrival of a new item, the online algorithm must either place it into one of the currently open bins, as long as this action does not violate the bin's capacity, or into a new bin. The online model has several applications related to dynamic resource management, such as virtual machine placement for server consolidation~\cite{song2013adaptive,wang2011consolidating} and memory allocation in data centers~\cite{bein2011cloud}. Online bin packing has a long history of study; in Section~\ref{subsec:related} we discuss, in more detail, some of the most significant known results in this setting.

In order to analyze the performance of an online algorithm, we will rely on the well-established framework of {\em competitive analysis}, which provides strict, theoretical performance guarantees against worst-case scenarios.  In fact, as stated in~\cite{10.5555/241938.241940}, bin packing has served as ``an early proving ground for this type of analysis in the broader context of online computation''. 
The {\em competitive ratio} of an online algorithm is defined as the worst-case ratio of the algorithm's cost (total number of opened bins) over the optimal offline cost (optimal number of opened bins given knowledge of all items). For bin packing, 
in particular, the standard performance metric is the {\em asymptotic competitive ratio},  in which the optimal offline cost is arbitrarily large~\cite{10.5555/241938.241940}.

While the standard online framework assumes that the algorithm has no information on the input sequence, a recently emerged and very active direction in Machine Learning seeks to leverage {\em predictions} on the input. More precisely, the algorithm has access to some machine-learned information on the input, which, however, may be erroneous; namely, there is a {\em prediction error} $\eta$ associated with it. The objective is to design algorithms which perform well if the prediction is accurate, maintain an efficient competitive ratio is the prediction is highly erroneous (i.e., adversarial), and also exhibit a gentle degradation of the competitive performance, as a function of the prediction error. 
Following the influential work~\cite{DBLP:conf/icml/LykourisV18}, we refer to the competitive ratio of an algorithm with an error-free prediction as the {\em consistency} of the algorithm, and to the competitive ratio with an adversarial prediction as its {\em robustness}. Several online optimization problems have been studied in this learning-augmented setting, including caching~\cite{DBLP:conf/icml/LykourisV18,rohatgi2020near}, ski rental and non-clairvoyant scheduling~\cite{NIPS2018_8174,WeiZ20}, makespan scheduling~\cite{lattanzi2020online}, 
rent-or-buy problems~\cite{DBLP:conf/nips/Banerjee20,anand2020customizing,gollapudi2019online}, 
secretary and matching problems~\cite{DBLP:conf/nips/AntoniadisGKK20,abs-2011-11743}, and metrical task systems~\cite{DBLP:conf/icml/AntoniadisCE0S20}. This is only a partial list of some representative results; see also the survey~\cite{mitzenmacher2020algorithms}.

\subsection{Contribution}
\label{sect:contr}

We give the first theoretical and experimental study of online bin packing with machine-learned predictions. Previous work on this problem has assumed ideal and error-free predictions that must be provided by a very powerful oracle, without any learnability considerations, as we discuss in more detail in Section~\ref{subsec:related}. In contrast, our algorithms exploit natural, and PAC-learnable predictions concerning the {\em frequency} at which item sizes occur in the input,
and our analysis incorporates the prediction error into the performance guarantee. As in other AI-motivated works on bin packing, namely~\cite{Korf02,Korf03,fukunaga2007bin,SchreiberK13}, we assume a discrete model in which item sizes are integers in $[1,k]$ for some constant $k$
(see Section~\ref{sec:prelim}). This assumption is not indispensable, and in Section~\ref{sec:frac} we extend the analysis to items that may have fractional sizes.

We first present and analyze an algorithm called \pp, that achieves optimal consistency, and is also efficient if the prediction error is relatively small. The algorithm builds on the concept of a {\em profile set}, which serves as an approximation of the items that are expected to appear in the sequence, given the frequency predictions. This is a natural concept that, perhaps surprisingly, has not been exploited in the long history of competitive analysis of bin packing, and which can be readily applicable to other online packing problems, such as multi-dimensional packing~\cite{ChristensenKPT17} and vector packing~\cite{azar2013tight}, as we discuss in Section~\ref{sec:conclusion}. 

As the prediction error grows, \pp may not be robust; we show, however, that this is an unavoidable price that any optimally-consistent algorithm with frequency predictions must pay. We thus design and analyze a more general class of {\em hybrid} algorithms that combine \pp and any one of the known robust online algorithms, and which offers a more balanced theoretical tradeoff between robustness and consistency. 

We perform extensive experiments on our algorithms. Specifically, we evaluate them on a variety of 
publicly available benchmarks, such as the BPPLIB benchmarks~\cite{benchmarks}, but also on distributions studied specifically in the context of offline bin packing, such as the {\em Weibull} distribution~\cite{CastineirasCO12}. The results show that our algorithms outperform the known efficient algorithms without any predictions. We also evaluate a heuristic that updates the predictions based on previously served items, and which is better suited for inputs generated from distributions that change over time (e.g., in the case of {\em evolving data}~\cite{gomes2017survey}).


Last, we show that our algorithms can be applicable in other settings. Specifically, we show an application of our algorithms in the context of {\em Virtual Machine} (VM) placement in large data centers~\cite{Mann15}: here, we obtain a more refined competitive analysis in terms of the {\em consolidation ratio}, which reflects the maximum number of VMs per physical machine, in typical scenarios.  Furthermore, we show that our analysis of \pp has a direct application in the {\em sampling-based} setting, in which the algorithm can access a small sample of the input, and the objective is to obtain an online algorithm that performs efficiently as a function of the number of sampled input items. Thus, our online algorithms can also serve as fast approximations to the offline problem, since frequency prediction with bounded error can be attained with a small sample size.

In terms of analysis techniques, we note that the theoretical analysis of the algorithms we present is specific to the setting at hand and treats items ``collectively''. In contrast, almost all known online bin packing algorithms are analyzed using a {\em weighting} technique~\cite{10.5555/241938.241940}, which treats each bin ``individually'' and independently from the others (by assigning weights to items and independently comparing a bin’s weight in the online algorithm and the optimal offline solution). In terms of the experimental analysis, in our experiments, the prediction error is a natural byproduct of the learning phase, and predictions are obtained by observing a small prefix of the input sequence. This is in contrast to several works in learning-enhanced algorithms, in which a perfect prediction is first generated by a very powerful oracle, then some random error is applied in order to simulate the imperfect prediction.
\subsection{Related Work} 
\label{subsec:related}
Online bin packing has a long history of study. The simplest algorithm is \nextfit, which places an item into its single open bin when possible; otherwise, it closes the bin (does not use it anymore) and opens a new bin for the item. \firstfit is another simple heuristic that places an item into the first bin of sufficient space and opens a new bin if required. \bestfit works similarly, except that it places the item into the bin of minimum available capacity, which can still fit the item. \nextfit has a competitive ratio of 2, while both \firstfit and \bestfit are 1.7-competitive~\cite{10.5555/241938.241940,JoDUGG74}. {Improving upon this performance requires more sophisticated algorithms, and many have been proposed in the literature.} 
These algorithms are variants of the classic {\em Harmonic} algorithm~\cite{lee1985simple}, which places items of approximately equal sizes, according to a harmonic sequence, in the same bin.
The currently best algorithm is the Advanced Harmonic (AH) algorithm, which has a competitive ratio of 1.57829~\cite{BaloghBDEL18}, whereas the best-known lower bound on the competitive ratio is 1.54278~\cite{BaloghBDELNEW21}. However, one should note that the Harmonic family of algorithms are designed specifically for improving the competitive ratio, and their typical performance is inferior to heuristics that are widely used in practice such as \firstfit and \bestfit, as shown in~\cite{KamaliL15}. 

Beyond competitive analysis, the bin packing problem has been studied under stochastic settings, in which the sizes of items are independent and identically distributed samples from
some distribution~\cite{Csir06,RheeT93,GuptaR20}. 
In this setting, the objective is to minimize the expected \emph{loss}, defined as the difference between the number of bins opened by the algorithm, and the total size of all items normalized by the bin capacity. 
Ideally, one aims for a loss that is as small as $o(n)$, where $n$ is the number of items in the sequence. For example, the Sum-of-Squares (SS) algorithm of~\cite{Csir06} has an expected loss of $O(\sqrt{n})$. The same guarantee can be achieved by an algorithm whose actions only depend on the size of the arriving item, and the levels of the bins in the current packing~\cite{GuptaR20}. The algorithms in the above works are oblivious to the distribution; however, if the length of the input $n$ and the distribution are known to the algorithm, the expected loss can be reduced to $O(1)$~\cite{Banerjee020}. We note, however, that all these algorithms are designed for stochastic inputs, and do not provide efficient worst-case guarantees on the competitive ratio. For example, the Sum-of-Square algorithm has a competitive ratio in the range $[2,2.\bar{7}]$~\cite{Csir06}, which is as bad as (and possibly worse than) the naive \nextfit algorithm. 	

Online bin packing has also been studied under the {\em advice complexity} model~\cite{BoyarKLL16,Mikkelsen16,ADKRR18}, in which the online algorithm has access to some error-free information on the input called {\em advice}. The objective is to quantify the tradeoffs between the competitive ratio and the size of the advice (i.e., the number of bits in the binary encoding of the advice). For instance,~\cite{ADKRR18} showed that $O(1)$ advice bits suffice to improve the competitive ratio of the problem to $1.5+\epsilon$, for any constant $\epsilon>0$. We emphasize, however, that such studies are only of theoretical interest for two reasons: First, the advice is assumed to have no errors, and it is possible that a single bit flip gravely affects the competitive ratio; Second, the advice is assumed to encode  {\em any} information, without any learnability considerations, and it may thus be provided by an omnipotent oracle that knows the optimal solution. To illustrate with an example, typical advice in the above works encodes information about the number of bins in the optimal offline packing that contain relatively large items (of size at least a third of the bin capacity) or relatively small items (less than a third of the bin capacity). 

Online bin packing was recently studied under an extension of the advice complexity model, in which the advice may be {\em untrusted}~\cite{DBLP:conf/innovations/0001DJKR20}. Here, the algorithm's performance is evaluated only at the extreme cases in which the advice is either error-free or adversarially generated, namely with respect to its consistency and its robustness, respectively. The objective is to find Pareto-efficient algorithms concerning these two metrics, as a function of the advice size. However, this model is not concerned with the algorithm's performance in typical cases in which the prediction does not fall in one of the two above extremes, does not incorporate the prediction error into the analysis, and does not consider the learnability aspects of the advice. In particular, even with error-free predictions, the algorithm of~\cite{DBLP:conf/innovations/0001DJKR20} has a competitive ratio as large as 1.5, whereas a single bit error may result in a competitive ratio that is as large as 6. 

Concerning the application of frequency predictions in competitive online optimization, we note that, perhaps surprisingly, such predictions have not been used widely, despite their simplicity and effectiveness.  \cite{MahdianNS07} gave improved competitive ratios for a  generalized online matching problem motivated by advertisement space allocation, using unreliable frequency estimates of keywords.  Recently, and concurrently with the conference version of our work, \cite{knapsack22} studied the online knapsack problem under frequency predictions, where each item has a value and a size, and the objective is to maximize the value of items that are accepted (and can fit) in the knapsack. Here, the concept of ``frequency prediction'' has a more liberal notion: it describes, for each possible value, an estimate of the total size of all items of that value. In contrast, in the bin packing setting, item frequency is a less complicated concept, which benefits the design and applicability of the corresponding algorithms.



\section{Online Bin Packing: Model and Predictions} 
\label{sec:prelim}
We begin with some preliminary discussions related to online bin packing. 
The input to the online algorithm is a sequence  $\sigma= a_1, \ldots ,a_n$, where $a_i$ is the size of the $i$-th item in $\sigma$. We denote by $n$ the length of $\sigma$, and by $\sigma[i,j]$ the subsequence of $\sigma$ that consists of items with indices $i, \ldots, j$ in $\sigma$.

We denote by $k \in \mathbb{Z}^+$ the bin capacity. Note that $k$ is independent of $n$, and is thus constant. 
We assume that the size of each item is an integer in $[1,k]$, where $k$ is the bin capacity. This is a natural assumption on which many efficient algorithms for bin packing rely, e.g.,~\cite{SchreiberK13,fukunaga2007bin,csirik2006sum}.
Furthermore, 
without any restriction on the item sizes, ~\cite{Mikkelsen16} showed that no online algorithm with advice of size sublinear in the size of the input can have competitive ratio better than 1.17 (even if the advice is error-free). This negative result implies that some restriction on item sizes is required so as to leverage frequency predictions. 
Nevertheless, in Section~\ref{sec:frac} we show how to handle fractional items, and how to express the performance loss due to such items.


Given an online algorithm $A$ (with no predictions), we denote by $A(\sigma)$ its output on input $\sigma$, i.e., the packing it produces, and by $|A(\sigma)|$ the number of bins in its output.
We denote by $\opt(\sigma)$ the offline optimal algorithm with knowledge of the input sequence. The (asymptotic) competitive ratio of $A$ is defined~\cite{10.5555/241938.241940} as
$$
\lim_{n\to \infty} \sup_{\sigma: |\sigma|=n} \frac{|{A(\sigma)}|}{{|\opt(\sigma)|}}.
$$

Consider a bin $b$. For the purpose of the analysis, we will often associate $b$ with a specific configuration of items that can be placed into it. We thus say that $b$ is of 
{\em type} $(s_1, s_2, \ldots ,s_l, e)$, with $s_i\in[1,k]$, $e \in [0,k]$ and $\sum_{j=1}^l s_j +e=k$, in the sense that the bin
can pack $l$ items of sizes $s_1, \ldots ,s_l$, with a remaining empty space equal to $e$. We specify that a bin is {\em filled according to type} $(s_1, s_2, \ldots ,s_l, e)$, if it contains $l$ items of sizes $s_1, \ldots ,s_l$, with an empty space $e$. Note that a type induces a partition of $k$ into $l+1$ integers; we call each of the $l$ elements $s_1, \ldots ,s_l$ a {\em placeholder}, and
denote by $\tau_k$ the number of all possible bin types. Observe that $\tau_k$ depends only on $k$ and not on the length $n$ of the sequence, and is constant, since $k$ is constant. 


Consider an input sequence $\sigma$. For any $x \in [1,k]$, let $n_{x,\sigma}$ 
denote the number of items of size $x$ in $\sigma$. We define the {\em frequency of size $x$ in $\sigma$}, denoted by $f_{x,\sigma}$, to be equal to $n_{x,\sigma}/n$, hence $f_{x,\sigma} \in [0,1]$, 
and $\sum_{x \in [1,k]} f_{x,\sigma}=1$. 
Our algorithms will use size frequencies as predictions. Namely, for every $x \in [1,k]$, there is a predicted value of the frequency of size $x$ in $\sigma$, which we denote by $f'_{x,\sigma}$. The predictions come with an error, and in general, $f'_{x,\sigma} \neq f_{x,\sigma}$; for instance, it may very well be the case that $\sum_{x \in [1,k]} f'_{x,\sigma} \neq 1$.
To quantify the prediction error, let $\bm{f_\sigma}$ and $\bm{f'_\sigma}$ denote the frequencies and their predictions in $\sigma$, respectively, as points in the $k$-dimensional space. 
In line with previous work on online algorithms with predictions, e.g.~\cite{NIPS2018_8174}, we can define 
the error $\eta$ as the $L_1$ norm of the distance between $\bm{f_\sigma}$ and $\bm{f'_\sigma}$. 
These size frequencies are PAC-learnable, due to the following (folklore) fact:

\begin{remark}\cite{canonne2020short}.
\label{remark:pac}
for any given  $\epsilon>0$ and $\delta \in (0,1]$, a sample of size $\Theta((k+ \log (1/\delta))/{\epsilon ^2})$ is sufficient (and necessary) to learn the frequencies of $k$ item sizes with accuracy  $\epsilon$ and error probability $\delta$, assuming the distance measure is the $L_1$-distance.
\end{remark}



We denote by $A(\sigma,\bm{f'_\sigma})$ the output of $A$ on input $\sigma$ and
predictions $\bm{f'}_\sigma$. 
To simplify notation, we will omit $\sigma$ when it is clear from context, i.e., we will use $\bm{f'}$ in place of $\bm{f'_\sigma}$.


\section{Profile Packing}
\label{sec:profile}
In this section, we present and analyze an online algorithm with predictions $\bm{f'}$, which we call \pp. The algorithm is based on the concept of a {\em profile}, denoted by $P_{\bm{f'}}$, which we define as the multiset that consists of $\lceil f'_x m \rceil$ items of size $x$, for all $x\in [1,k]$. Here, $m$ is a parameter that is a sufficiently large, but constant integer, which will be specified later. One may thus think of the profile as an ``approximation'' of the multiset of items that is expected as input, given the predictions $\bm{f'}$.

Consider an {\em optimal} packing of the items in $P_{\bm{f'}}$. Since the size of items in 
$P_{\bm{f'}}$ is bounded by $k$, it is possible to compute such an optimal packing in constant 
time, e.g., using an efficient exact heuristic~\cite{Korf02}). 
We will denote by $p_{\bm{f'}}$ the number of bins in the optimal packing of all items in the profile. Note that each of these $p_{\bm{f'}}$ bins is filled according to a certain type that is specified by the optimal packing of the profile.  We simplify notation and
use $P$ and $p$ instead of $P_{\bm{f'}}$ and $p_{\bm{f'}}$, respectively, when $\bm{f'}$ is implied.

We define the actions of \pp. Prior to serving any items, \pp opens $p$ empty bins of types that are in accordance with the optimal packing of the profile (so that there are $\lceil f'_x m \rceil$ placeholders of size $x$ in these empty bins). When an item, say of size $x$, arrives, the algorithm will place it into any placeholder reserved for items of size $x$, provided that such one exists. Otherwise, i.e., if all placeholders for size $x$ are occupied, the algorithm will open another set of $p$ bins, again of types determined by the optimal profile packing. We call each such set of $p$ bins a {\em profile group}. 
Note that the algorithm does not close any bins at any time, that is, any placeholder for 
an item of size $x$ can be used at any point in time, so long as it is unoccupied.

We require that \pp opens bins in a \emph{lazy} manner, that is, the $p$ bins in the profile group are opened virtually, and each bin contributes to the cost only after receiving an item.
Last, suppose that for some size $x$, it is $f_x >0$, whereas its prediction is $f'_x=0$. In this case, $x$ is not in the profile set $P$. We call items of such size \emph{special}. 
\pp packs these special items separately from others, using \firstfit. Algorithm~\ref{AlgPP} describes \pp in pseudocode.

{\begin{center}
		\algrenewcomment[1]{\(\triangleright\) #1}
		\scalebox{.95}{
			\begin{minipage}[htb!]{\columnwidth}
				\newpage \ \vspace{-1cm} \\ 
				\begin{algorithm}[H]
					\caption{$\pp$}\label{AlgPP}
					\textbf{Input:} \parbox[t]{\linewidth}{$\sigma$: the input sequence with items in $[1,k]$} 
					{\parbox[t]{\linewidth}{$\hspace{1cm}{\bm f'}$: predicted item frequencies ($\forall x\in [1,k], {f'_x} \in [0,1]$)}}\vspace*{1mm}
					\textbf{Output:} a packing of $\sigma$ (a set of bins that contain all items in $\sigma$) \vspace*{2mm}
					\      \\
					\hspace*{5mm}\Comment{form the profile set.\vspace*{-.8mm}}
					\begin{algorithmic}[1]
						\State{$P_{\bm f'} \leftarrow \phi$ }\label{lineInitStart}  
						\FOR {$x \in \{1, \ldots k \}$}
						\State \hspace*{\algorithmicindent}
						$P_{\bm f'} \leftarrow P_{\bm f'} \cup \{\lceil f'_x m\rceil \text{ items of size } x \}$
						\ENDFOR 
						\ \vspace*{1.4mm}
						\Commentak{compute an optimal packing of the profile set in which each bin is partitioned to placeholders.}
						\State $\opt_{\bm{f'}}(P) = $ optimal packing of $P_{\bm{f'}}$.\label{optPack}
						\State{ $p_{f'} \leftarrow |\opt_{\bm{f'}}(P)|$ }
						\State{\ngr $\ \leftarrow$  $p_{f'}$ empty bins in accordance with $\opt_{\bm{f'}}(P)$.}\hfill \Comment{open the first profile group}
						\State{\eb $\ \leftarrow$ \ngr} \hfill\Comment{bins that are opened but empty}
						\State{\neb $\ \leftarrow \phi$}\hfill\Comment{bins that contribute to the cost}\label{lineInitEnd}    
						\vspace*{2mm}
						
						\FOR{$i \in  (1, \ldots, n)$} \label{line:act1} \hfill \Comment{packing the sequence in an online manner}
						\State{ $x \leftarrow \sigma[i]$}
						\IF{$\opt_{\bm{f'}}(P)$ has no placeholder of size $x$}
						\State{use \firstfit to pack $\sigma[i]$}\hfill \Comment{\ $x$ is a special item}
						\ELSE
						\State $N_x \leftarrow$ bins in \neb \ with placeholder for  $x$  
						\IF{$N_x \neq \phi$}\hfill\Comment{place $\sigma[i]$ in a non-empty bin}
						\State $B \leftarrow$ any bin of $N_x$\label{anyBin1}
						\State place $\sigma[i]$ in a placeholder of size $x$ in $B$
						\ELSE
						\State $E_x \leftarrow $ bins in \eb \ with placeholder for  $x$  
						\IF{$E_x = \phi$}\hfill
						\hspace*{1.3cm}\Comment{open a new profile group}
						\State{\ngr $\ \leftarrow$  $p_{f'}$ new bins as in $\opt_{\bm{f'}}(P)$}
						\State{\eb $\ \leftarrow$ \eb $\ \cup$ \ngr} \ 
						\ENDIF  \\
						\ \ \ \ \hspace*{.75cm}
						\Comment{place $\sigma[i]$ in a (virtually) opened empty bin}
						\State $B \leftarrow$ any bin of $E_x$\label{anyBin2}
						\State place $\sigma[i]$ in a placeholder of size $x$ in $B$
						\State \eb $\ \leftarrow$ \eb $\setminus \  \{B\} $
						\State \neb $\ \leftarrow$ \neb $\ \cup\  \{B\} $
						\ENDIF
						\ENDIF
						\ENDFOR \label{line:act2}
						\ \\ \textbf{return} \neb \hfill\Comment{return the set of non-empty bins} 
						
					\end{algorithmic}
				\end{algorithm}
			\end{minipage}
		}
	\end{center}
}

\newpage
 \begin{example}\label{ex:2}
	Suppose that $k=10$, $m=20$, and that the predictions $\bm{f'}$ are given in the following table.  
	
	\begin{center}
		\scalebox{.9}{
	\begin{tabular}{|c|c|c|c|c|c|c|c|c|c|c|}\hline
	$x$ & 1 & 2 & 3 & 4 & 5 & 6 & 7 & 8 & 9 & 10 \\
	\hline 
$f'_x$  & 0.11 & 0.53 & 0.15 & 0.10 & 0 & 0.03 & 0.05 & 0 & 0.03 & 0 \\
	\hline
	\end{tabular}}
	\end{center}

	%
	%
	The profile based on these frequencies is  $P_{\bm f'} = \{3 \times 1, {11} \times 2,3 \times 3, 2\times 4, 6,7,9\}$, where $i \times x$ indicates $i$ items of size $x$. For example, there are $\lceil f'_1 m\rceil = \lceil 0.11 \cdot 20 \rceil = 3$ items of size $x=1$ in the profile. There is an optimal packing of $P_{\bm f'}$ that consists of 7 bins. Figure~\ref{fig:pp} illustrates this packing, as well as the packing of \pp on an example online sequence.  

\begin{figure}[!h]
	\centering
\hspace*{-3mm}	\scalebox{.9}{\figOne}
	\caption{The packing output by \pp on the input sequence $\sigma = 2,3,1,4,10,2,9,4,6,9,2,6,5$. The predictions and the corresponding profile are as given in Example~\ref{ex:2}. The optimal packing of the profile consists of seven bins $B_1, \ldots, B_7$. When serving $\sigma$, the algorithm opens two profile groups, denoted by $P_1$ and $P_2$. The profile group $P_2$ is opened upon serving the second item of size 9, namely the 10th request in the sequence. The placeholders that have received an item are highlighted. Items 10 and 5 are special items and are packed using \firstfit. The total cost for serving $\sigma$ is equal to 9, and there are 7 bins that are only opened virtually, hence they do not contribute to the cost.
       \label{fig:pp}}		
\end{figure}
\end{example}

\subsection{Analysis of \pp}
\label{subsec:analysis}

We first show that in the ideal setting of error-free prediction, \pp is near-optimal (Lemma~\ref{th:no-error}). This result will be very useful in the analysis of the more realistic setting of erroneous predictions (Theorem~\ref{th:consistent}). We denote by $\epsilon$ any fixed constant less than 0.5, and in order to achieve consistency equal to $1+\epsilon$, it will suffice to define $m$ to be such that $m \geq  3\tau_k k/\epsilon$, as will become evident in the proof of Lemma~\ref{th:no-error}. Given that $k$ (and thus $\tau_k$) and $\epsilon$ are constants, $m$ is also a  constant.

\begin{lemma} \label{th:no-error}
	For any constant $\epsilon\in(0,0.5]$, and error-free prediction ($\bm{f'}=\bm{f}$), \pp has competitive
	ratio at most $1+\epsilon$. 
\end{lemma}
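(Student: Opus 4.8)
The plan is to sandwich $|\pp(\sigma)|$ between two estimates that both grow linearly in $n$ and whose ratio tends to $p/(p-\tau_k)$, then to exploit the largeness of $m$ to make this at most $1+\epsilon$. For the upper bound, note first that with error-free prediction every item lands in a placeholder: a size $x$ satisfies $f_x>0$ if and only if $f'_x>0$, so there are no ``special'' items handled by \firstfit. Each profile group supplies $\lceil f'_x m\rceil=\lceil f_x m\rceil$ placeholders for size $x$, so the number $G$ of groups that \pp opens is $\max_{x:\,f_x>0}\lceil n_{x,\sigma}/\lceil f_x m\rceil\rceil$. Since $n_{x,\sigma}=f_x n$ and $\lceil f_x m\rceil\ge f_x m$, each term is at most $\lceil n/m\rceil$; hence $G\le\lceil n/m\rceil$ and, because each group contributes $p$ bins (lazy opening only helps), $|\pp(\sigma)|\le G\,p\le\lceil n/m\rceil\,p$.

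The crux is the matching lower bound $|\opt(\sigma)|\ge (n/m)(p-\tau_k)$, which I would obtain by transporting an optimal packing of $\sigma$ down to a feasible packing of the profile $P$. Write the optimal solution as $y_T$ bins of each type $T$, so that $\sum_T y_T=|\opt(\sigma)|$ and $\sum_T y_T\, n_x(T)=n_{x,\sigma}$ for every size $x$, where $n_x(T)$ is the number of size-$x$ placeholders of type $T$. Scaling every $y_T$ by $m/n$ and rounding up, I claim that using $\lceil y_T m/n\rceil$ bins of type $T$ already packs all of $P$: for each $x$ the number of supplied placeholders is an integer that is at least $\sum_T (y_T m/n)\,n_x(T)=(m/n)\,n_{x,\sigma}=f_x m$, and therefore at least $\lceil f_x m\rceil$, which is exactly the demand of $P$. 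This packing uses at most $\sum_{T:\,y_T>0}\bigl(y_T m/n+1\bigr)\le (m/n)|\opt(\sigma)|+\tau_k$ bins, so by optimality of $p$ we get $p\le (m/n)|\opt(\sigma)|+\tau_k$, i.e. $|\opt(\sigma)|\ge (n/m)(p-\tau_k)$. I expect this to be the main obstacle: one must argue carefully that rounding the scaled type counts still covers every demand of $P$ (the integrality-plus-ceiling step above) while inflating the bin count by only the constant $\tau_k$. The decisive feature is that this overhead sits \emph{inside} the large quantity $p$ rather than being multiplied by the number of groups $G$.

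Combining the two estimates gives $|\pp(\sigma)|/|\opt(\sigma)|\le \lceil n/m\rceil\,p\,\big/\,\bigl((n/m)(p-\tau_k)\bigr)$, whose limit as $n\to\infty$ is $p/(p-\tau_k)=1+\tau_k/(p-\tau_k)$. It then remains to check that $p$ is large. The profile contains $\sum_{x}\lceil f_x m\rceil\ge m\sum_x f_x=m$ items, each of size at least $1$, so its total size is at least $m$ and hence $p\ge m/k\ge 3\tau_k/\epsilon$ by the choice $m\ge 3\tau_k k/\epsilon$. Substituting this bound into $p/(p-\tau_k)$ yields an asymptotic ratio of at most $\tfrac{3}{3-\epsilon}$, and a one-line computation shows $\tfrac{3}{3-\epsilon}\le 1+\epsilon$ for every $\epsilon\in(0,0.2]$, completing the proof. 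Apart from the transport/rounding step, everything else is routine bookkeeping with the ceilings and the limit $\lceil n/m\rceil/(n/m)\to 1$.
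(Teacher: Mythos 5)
Your proof is correct, and while it shares the paper's skeleton (bound the cost by $\lceil n/m\rceil\cdot p$, then relate $p$ to an $m/n$-scaled copy of $\opt(\sigma)$), the key "transport" step is executed differently. The paper pads $\opt(\sigma)$ with up to $(g-1)\tau_k$ extra bins so that every type count is divisible by $g$, splits the result into $g$ identical sub-packings $\overline{N}$, and then shows via a chain of ceiling inequalities (which implicitly assumes $m^2<n$) that $\overline{N}$ misses at most one profile item per size, giving $p\le q+k$. You instead scale each type count $y_T$ by $m/n$ and round up per type, observing that an integer lower-bounded by $f_x m$ is lower-bounded by $\lceil f_x m\rceil$; this yields the cleaner inequality $p\le (m/n)|\opt(\sigma)|+\tau_k$ in one step, with no divisibility padding and no asymptotic side condition on $m$ versus $n$. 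The trade-off is that your additive loss is $\tau_k$ (the number of bin types) rather than the paper's $k$, but since $m\ge 3\tau_k k/\epsilon$ forces $p\ge m/k\ge 3\tau_k/\epsilon$, the loss is still an $\epsilon/3$ fraction of $p$ and the arithmetic $3/(3-\epsilon)\le 1+\epsilon$ closes the argument. Both proofs rely on the same two structural facts (at most $\lceil n/m\rceil$ profile groups under error-free prediction, and the largeness of $p$ relative to $\tau_k$); yours packages the second fact more directly and is, if anything, tighter in its bookkeeping.
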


\begin{proof} 
	Given an input sequence $\sigma$,
 denote by $PP(\sigma, \bm{f'})$ the packing output by the algorithm. This output can be seen as consisting of $g$ profile group packings for some positive integer $g$ (since each time the algorithm allocates a new set of $p$ bins, a new profile group is generated). Since the input consists of $n$ items, and the profile has at least $m$ items, we have that $ g \leq \lceil {n}/{m}\rceil $. 
	
	Given an optimal packing $\opt(\sigma)$, we define a new packing, denoted by $N$, that not only packs items in $\sigma$, but also additional items as follows. $N$ contains all (filled) bins of $\opt(\sigma)$, along with their corresponding items. For every bin type in $\opt(\sigma)$, we want that
	$N$ contains a number of bins of that type that is divisible by $g$. To this end, we add up to $g-1$ filled bins of the same type in $N$. 
	
	We can argue that $|N|$ is not much bigger than $|\opt(\sigma)|$. We have that
       \[ 
	|N| \leq |\opt(\sigma)| + (g-1)\tau_k < |\opt(\sigma)| 
	+ n\tau_k/m \leq |\opt(\sigma)| ( 1+ \tau_kk/m),
       \] 
	Specifically, the first inequality holds from the way $N$ was generated, the second inequality holds because $g \leq \lceil n/m \rceil$ (thus $g-1 < n/m$),  and the last inequality holds because $|\opt(\sigma)| \geq \lceil n/k\rceil$ (each bin can contain at most $k$ items).
	Let $\epsilon' = \epsilon/3$ and recall that we have chosen $m$ so that $m \geq  \tau_k k/\epsilon'$. We conclude that 
       \begin{align}
       |N| \leq (1+\epsilon') |\opt(\sigma)|. \label{ineqqq1} 
	\end{align}	
	By construction, $N$ contains $g$ copies of the same bin (i.e., bins that are filled according to the same type). Equivalently, $N$ consists of $g$ copies of the same packing, which we denote by $\overline{N}$. Define $q=|\overline{N}|$ to be the number of bins in this packing.
	We will show that $p$ is not much bigger than $q$, which is crucial in the proof. 
	The number of items of size $x$ in the packing $\overline{N}$
	is at least $\lceil n_x/g \rceil$, since $N$ contains at least $n_x$ items of size $x$. 
	We can give the following lower bound on $\lceil n_x/g \rceil$.
  \begin{align*}
	\lceil n_x/g \rceil   
	&\geq \  n_x/\lceil n/m \rceil  \tag{ 	$g \leq \lceil n/m \rceil$} \\
	&> \ n_x m/(n+m) \tag{$\lceil n/m \rceil < (n+m)/m$} \\
	&=  \  n_x (m/n - m^2/(n^2+mn)) \\ 
	&\geq  \  n_x m/n - m^2/(n+m)  \tag{ $n_x\leq n$} \\ 
	&\geq    \  \lceil n_x m/n \rceil -1 - m^2/(n+m)  
	\tag{$y \geq \lceil y\rceil-1$ for any $y$  }
	\\ 
	&> \  \lceil n_x m/n \rceil - 2 \tag{$m^2<n$}. \\ 
\end{align*}
The last inequality holds because $m$ is a constant with respect to $n$, which defines the input size. 
We conclude, from the above, that $\lceil n_x/g \rceil   \geq  \lceil n_x m/n \rceil - 1$.	
Given that there are
$\lceil f'_x m \rceil = \lceil n_x m/n \rceil$ items of size $x$ in the profile set, we can further conclude that for any $x\in [1,k]$, $\overline{N}$ packs all items of size $x$ that appears in the profile set, with the exception of at most one such item. From the statement of \pp, and its optimal packing of the profile set, we infer that 
\begin{align}
q+k \geq p, 
\label{ineqqq2}
\end{align}
and recall that $p$ denotes the size of the optimal packing of the profile. Moreover, we have:
       \begin{align*}
       	q  = |\overline{N}| & = |N|/g \tag{by definition of $q$}\\
	 &\geq |\opt(\sigma)|/g \geq n/(k g) \tag{$|N| \geq |\opt(\sigma)| \geq n/k$}  \\ 
       & \geq n/(k\lceil n/m\rceil)  \tag{$g \leq \lceil n/m \rceil $}\\
       &> (\lceil n/m \rceil m- m)/(k \lceil n/m \rceil ) \\
       &\geq  m/k -m^2/kn \\
       &> m/k - \epsilon' \tag{$m^2/kn\in o(1)$ and $\epsilon' \in \Theta(1)$ } \\
       & >  \tau_k/\epsilon' -\epsilon' \tag{$m \geq  \tau_k k/\epsilon'$}\\
       & >  (\tau_k-1)/\epsilon' >   k/\epsilon' \tag{$\epsilon' < 1/\epsilon'$, $\tau_k > k+1$}.
       \end{align*}
       We thus showed that $q > k/\epsilon'$, hence~\eqref{ineqqq2} implies that 
	\begin{align}
	p< q(1+\epsilon').
		\label{ineqqq3}
	\end{align}
	We conclude that the number of bins in each profile group is within a factor $(1+\epsilon')$ of the 
	number of bins in $\overline{N}$. Moreover, recall that $PP(\sigma,\bm{f'})$ consists of $g$ profile groups, and $N$ consists of $g$ copies of $\overline{N}$. Combining this with previously shown properties, we have that 
	\begin{align*}
       |PP(\sigma,\bm{f'})| &\leq g \cdot p  \\
       &< g(1+\epsilon')q \tag{by Inequality (\ref{ineqqq3})} \\
       &\leq (1+\epsilon')(1+\epsilon')|\opt(\sigma)| \tag{by Inequality (\ref{ineqqq1})} \\
       &< (1+3\epsilon')|\opt(\sigma)| \tag{$(1+\epsilon')^2 < 1+3\epsilon'$} \\ & = (1+\epsilon)|\opt(\sigma)|,
       \end{align*}
       which concludes the proof.
\end{proof}

We will now use Lemma~\ref{th:no-error} to prove a more general result that incorporates the prediction error into the analysis. To this end, we will relate the cost of the packing of \pp to the packing that the algorithm would output if the prediction were error-free, which will allow us to apply the result of Lemma~\ref{th:no-error}. Specifically, we will argue that in the presence of prediction error, the cost of \pp may be affected in two ways: The number of bins in a single profile of \pp may increase, and more profiles may have to be opened. In the proof of the following theorem, for each of these two cases, we bound the number of additional opened bins as a function of error.

\begin{theorem} \label{th:consistent}
	For any constant $\epsilon\in(0,0.5]$, and predictions $\bm{f'}$ with error $\eta$, \pp has competitive ratio at most $1+(2+5\epsilon)\eta k+\epsilon$.
\end{theorem}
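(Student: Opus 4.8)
The plan is to bound $|PP(\sigma,\bm{f'})|$ by the cost of an almost-optimal packing plus an \emph{additive} number of bins that is linear in $\eta n$, and then to divide by $|\opt(\sigma)|\geq n/k$ so that this additive slack becomes the advertised $\eta k$ contribution to the competitive ratio. First I would split the items of $\sigma$ into the \emph{special} items (those of a size $x$ with $f'_x=0$, packed by \firstfit) and the \emph{non-special} items (packed into profile groups), and write the cost as the number of \firstfit bins plus the number of profile-group bins that actually receive an item. Here the lazy opening of bins is decisive: I must \emph{not} bound the latter by $g\cdot p$, since the number $g$ of profile groups can be enormous when some $f'_x$ is tiny while $f_x$ is not; instead I count only \emph{touched} bins.

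For the special items, their total count is $\sum_{x:\,f'_x=0} n_x = n\sum_{x:\,f'_x=0} f_x \leq \eta n$, since each such size contributes $|f_x-f'_x|=f_x$ to the error, so \firstfit uses at most $\eta n$ bins on them. For the non-special items, set $g_0=\lceil n/m\rceil$ and call an item \emph{overflow} if it lands in a profile group created after the first $g_0$ of them. Because the first $g_0$ groups contain $g_0\lceil f'_x m\rceil \geq f'_x n$ placeholders of size $x$, the number of overflow items of size $x$ is at most $(f_x-f'_x)^+ n$; each such item touches at most one bin, and these bins are disjoint from the bins of the first $g_0$ groups. Combining the overflow and the special items and using $\sum_x (f_x-f'_x)^+ \leq \eta$, the number of touched bins other than those of the first $g_0$ groups is at most $\eta n$ (the first factor of $\eta n$), so the cost is at most $g_0\,p + \eta n$.

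It remains to bound $g_0\,p$, and here I would reuse the construction of Lemma~\ref{th:no-error} almost verbatim with $g_0$ in the role of $g$: augment $\opt(\sigma)$ to a packing $N$ whose multiplicity of every bin type is divisible by $g_0$, so that $|N|\leq (1+\epsilon')|\opt(\sigma)|$ and $N$ is $g_0$ identical copies of a packing $\overline N$ with $q:=|\overline N|$ and $g_0 q=|N|$. The one genuinely new point, which I expect to be the main obstacle, is that the profile $P$ is built from the predictions $\bm{f'}$, whereas $\overline N$ is built from the true frequencies $\bm{f}$; hence, unlike in Lemma~\ref{th:no-error}, the multiset $\overline N$ need \emph{not} contain the profile items. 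The number of ``missing'' size-$x$ items is at most $(\lceil f'_x m\rceil-\lceil n_x/g_0\rceil)^+ \leq (f'_x-f_x)^+ m + O(1)$ (using $m^2<n$), so packing them in fresh bins yields $p \leq q + m\sum_x (f'_x-f_x)^+ + O(k) \leq q + \eta m + O(k)$, where the over-prediction mass $\sum_x(f'_x-f_x)^+\leq \eta$ supplies the \emph{second} factor of $\eta n$ once multiplied by $g_0$. This gives $g_0\,p \leq (1+\epsilon')|\opt(\sigma)| + \eta n + O(g_0 k + \eta m)$.

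Assembling the pieces yields $|PP(\sigma,\bm{f'})| \leq (1+\epsilon')|\opt(\sigma)| + 2\eta n + (\text{lower-order terms})$. Dividing by $|\opt(\sigma)|\geq n/k$ turns $2\eta n$ into $2\eta k\,|\opt(\sigma)|$, while the choice $m\geq 3\tau_k k/\epsilon$ (which makes $g_0 k$ and $\eta m$ negligible against $|\opt(\sigma)|$) together with $\epsilon'=\epsilon/3$ absorbs the remaining terms into the additive $+\epsilon$ and into the slack $5\epsilon\,\eta k$ of the coefficient. The crux is the deficit estimate $p \leq q + \eta m + O(k)$: it is precisely where the mismatch between predicted and true frequencies is paid for, and it is what converts the clean $1+\epsilon'$ bound of the error-free case into the $1+(2+5\epsilon)\eta k+\epsilon$ bound under error.
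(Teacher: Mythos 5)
Your proof is correct, but it takes a genuinely different route from the paper's. The paper introduces the hypothetical error-free run $PP(\sigma,\bm f)$ as an intermediate object, invokes Lemma~\ref{th:no-error} on it as a black box, and controls the gap between the two runs via the symmetric difference of the two profile multisets: it shows $p_{\bm{f'}}\le p_{\bm f}+k+\eta m$ and then pays this $k+\eta m$ per-group slack twice, once for the first $g$ profile groups and once for the items that spill into the groups beyond $g$ (which, by laziness, must be ``differing'' elements), before converting everything into a multiplicative factor via $p_{\bm f}\ge\lceil m/k\rceil$. You bypass $PP(\sigma,\bm f)$ entirely and compare directly to the augmented packing $N$ from Lemma~\ref{th:no-error}'s proof, splitting the error cost into two one-sided contributions: the under-prediction mass $\sum_x(f_x-f'_x)^+\le\eta$ pays for the special and overflow items (at most $\eta n$ touched bins beyond the first $g_0$ groups), while the over-prediction mass $\sum_x(f'_x-f_x)^+\le\eta$ pays for the deficit $p\le q+\eta m+O(k)$ between the predicted profile and the per-copy packing $\overline N$; dividing the resulting additive $2\eta n$ by $|\opt(\sigma)|\ge n/k$ yields the $2\eta k$ coefficient. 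Your accounting is arguably more transparent about where the factor $2$ comes from, and it sidesteps an awkward step in the paper's chain (the inequality $2k+2\eta m\le 2\eta m(1+\epsilon)$ is justified there by $k\le\epsilon m$ but really needs $k\le\epsilon\eta m$, which fails for small $\eta$); the paper's route is more modular in that it reuses Lemma~\ref{th:no-error} verbatim. One shared caveat: both your overflow count and the paper's ``differing element'' argument implicitly assume that \pp fills placeholders in earlier profile groups before resorting to later ones, so your proposal is at the same level of rigor as the paper on this point.
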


\begin{proof}
	Let $\bm{f}$ be the frequency vector for the input $\sigma$, where  
	$\bm{f}$ is unknown to the algorithm. In this context, $PP(\sigma,\bm{f})$ is the packing output by \pp with error-free prediction, and from Lemma~\ref{th:no-error} we know that
	$
	|PP(\sigma,\bm{f})| \leq (1+\epsilon) |\opt(\sigma)|.
	\label{eq:pp_f}
	$ 
	Recall that $P_{\bm{f'}}$ denotes the profile set of \pp on input $\sigma$ with predictions $\bm{f'}$, and $p_{\bm{f'}}$ denotes the number of bins in the optimal packing of $P_{\bm{f'}}$; $P_{\bm{f}}$ and $p_{\bm{f}}$ are defined similarly. 
	We will first relate $p_{\bm{f}}$ and $p_{\bm{f'}}$ in terms of the error $\eta$. Note that the multisets $P_{\bm{f}}$
	and $P_{\bm{f'}}$ differ in at most $\sum_{x=1}^k \mu_x$ elements, where $\mu_x=|\lceil f_x m \rceil -\lceil f'_x m \rceil |$. We call these elements {\em differing}. We have $\mu_x \leq |(f_x-f'_x)m|+1$, hence $\sum_{x=1}^k \mu_x \leq k+ \sum_{x=1}^k |(f_x-f'_x)m|\leq k+ \eta m$, where 
	the last inequality holds because $\eta$ is the $L_1$ norm of the distance between $\bm{f_\sigma}$ and $\bm{f'_\sigma}$, that is $\eta =\sum_{x=1}^k |(f_x-f'_x)|$. 
	We conclude that the number of bins in the optimal packing of $P_{\bm{f'}}$ exceeds the number of bins in the optimal packing of $P_{\bm{f}}$ by at most $k+\eta m$, i.e., $p_{\bm{f'}} \leq p_{\bm f}+k +\eta m$.
	
	Let $g$ and $g'$ denote the number of profile groups in $PP(\sigma,\bm{f})$ and $PP(\sigma,\bm{f'})$, respectively.
	We aim to bound $|PP(\sigma,\bm{f'})|$, and to this end we will first show a bound on the number of bins opened by $PP(\sigma,\bm{f'})$ in its first $g$ profile groups, then in on the number of bins in its remaining $g'-g$ profile groups (if $g'\leq g$, there is no such contribution to the total cost). For the first part, the bound follows easily: There are $g$ profile groups, each one consisting of $p_{\bm{f'}}$ bins, therefore the number of bins in question is at most 
	$g \cdot p_{\bm{f'}} \leq g (p_{\bm{f}}+k+\eta m)$. For the second part, since \pp is lazy, any item packed by $PP(\sigma,\bm{f'})$ in its last $g'-g$ packings has to be a differing element, which implies from the discussion above that $PP(\sigma,\bm{f'})$ opens at most $g(k+\eta m)$ bins in its last $g'-g$ profile groups. 
	The result follows then from the following inequalities: 
\begin{align*}
	& |PP(\sigma, \bm{f'})|   \\ 
	&\leq  g(p_{\bm {f}}+k+\eta m) +g (k+\eta m)   \\
	&    =  \  g(p_{\bm {f}}+2k+2\eta m)  \ \\
	& \leq  \  g(p_{\bm {f}}+ 2\eta m(1+\epsilon)) & \tag{$k \leq \epsilon m$}  \\
	& \leq  \  g(p_{\bm {f}}+ 2\eta \ p_{\bm {f}} k(1+\epsilon)) & 
	\tag{$p_{\bm {f}} \geq \lceil m/k\rceil$}  \label{Line:critical} \\
	& = g\cdot p_{\bm{f}}(1+2\eta k(1+\epsilon))  \label{Line:subseq1} \\
	& \leq  |PP(\sigma,\bm{f})|(1+2\eta k(1+\epsilon)) &  \\
	& \leq (1+\epsilon)(1+2\eta k(1+\epsilon)) |\opt(\sigma)| &  \tag{since   $|PP(\sigma,\bm{f})| \leq (1+\epsilon) |\opt(\sigma)|$}\\
	& = (1+2\eta k(1+\epsilon)^2+\epsilon) |\opt(\sigma)| & \\
	& = (1+2\eta k(1+\epsilon^2 + 2\epsilon)+\epsilon) |\opt(\sigma)| \\
	& < (1+ \eta k(2+ 5\epsilon)+\epsilon) |\opt(\sigma)|. & \tag{$\epsilon^2 < \epsilon/2$} \label{Line:subseq2}     
\end{align*}

\end{proof}


Theorem~\ref{th:consistent} shows that \pp has robustness that is linear in $k$.
 The following result proves that this is an unavoidable price that {\em any} online algorithm with frequency-based predictions must pay.

\begin{theorem}
	Let $c$ be any constant with $c<1$. Then for any $\alpha\leq c/k$,
	any algorithm with frequency predictions that is $(1+\alpha)$-consistent must have robustness at least $(1-c)k/2$. \label{th:tradeoff}
\end{theorem}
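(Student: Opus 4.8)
The plan is to prove this lower bound by an adversary argument that plays the prediction against the online nature of the algorithm. The leverage comes from two facts: (i) any algorithm $A$ is online and (we may assume) deterministic, so its packing of a prefix of the input depends only on that prefix and on the prediction $\bm{f'}$; and (ii) $(1+\alpha)$-consistency constrains $A$'s behaviour on \emph{every} input whose true frequency equals the prediction it is given. I would choose $\bm{f'}$ so that some prefix of the adversarial input $\sigma$ is simultaneously a prefix of a \emph{matching} instance (one whose frequency is exactly $\bm{f'}$). Consistency then pins down how $A$ must treat that prefix; but the actual continuation of $\sigma$ differs from the matching one, exposing the commitment forced by consistency as wasteful.

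Concretely, I would let $\bm{f'}$ place frequency $1/2$ on size $2$ and $1/2$ on size $k-2$, so that the optimal packing of the profile pairs each size-$2$ item with a size-$(k-2)$ item into a full bin. For the matching instance I use the ordering in which all $N/2$ size-$2$ items precede all $N/2$ size-$(k-2)$ items; since a size-$(k-2)$ item fits only in a bin with empty space at least $k-2$, i.e.\ a bin holding at most one size-$2$ item, such an item can reuse an existing bin only if that bin contains a \emph{single} size-$2$ item. The adversarial input $\sigma$ then consists of the size-$2$ prefix alone, served under the same prediction $\bm{f'}$; here $|\opt(\sigma)|\approx N/k$, since $\opt$ packs $\lfloor k/2\rfloor$ items per bin.

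The heart of the argument is a forcing lemma: a $(1+\alpha)$-consistent algorithm, with $\alpha\le 1/k$, must leave a $(1-c)$ fraction of the size-$2$ items each alone in its own bin. The reason is that every size-$2$ item that $A$ buries in a bin already holding another size-$2$ item destroys a slot that a size-$(k-2)$ partner could reuse in the matching completion, forcing that partner into a fresh bin; if too many items were packed this way, the cost on the matching instance would exceed $(1+\alpha)|\opt|=(1+\alpha)\tfrac N2$, contradicting consistency. Since $A$'s packing of the prefix is identical under $\sigma$ and under the matching completion, the lemma controls $A(\sigma,\bm{f'})$ directly: $A$ opens roughly $N/2$ bins, each with a single size-$2$ item, against $\opt(\sigma)\approx N/k$. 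Tracking the $(1-c)$ loss then yields robustness $(1-c)\tfrac{k}{2}$.

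The main obstacle is this quantitative forcing lemma, and specifically ruling out the following evasion. Rather than spreading the size-$2$ items, $A$ could compress a constant fraction of them into a few \emph{completely} full bins, each holding $\lfloor k/2\rfloor$ items, and concede a fresh bin to every orphaned size-$(k-2)$ partner. Each such dense bin costs only one extra bin on the matching instance, so the consistency budget of $\alpha N/2$ extra bins lets $A$ hide on the order of $\alpha Nk/4$ items this way; keeping this below a $c$-fraction of the items is exactly where the threshold $\alpha\le 1/k$ and the constant $c$ interact. Pushing the constant up to $1/2$ — as opposed to a weaker $\Theta(k)$ bound — is the delicate point, and I expect it to require optimizing the small item size together with the pairing multiplicity, or strengthening the gadget so that each dense bin is penalized by more than a single extra bin on the matching instance.
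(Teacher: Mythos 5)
Your construction is essentially the paper's: the paper predicts half the items of size $1$ and half of size $k-1$, observes that $(1+\alpha)$-consistency on the matching instance ($n$ ones followed by $n$ items of size $k-1$, with $\opt=n$) allows at most $\alpha n+o(n)$ bins to contain two or more size-$1$ items, hence forces at least $(1-k\alpha)n-o(n)$ singleton bins on the shared prefix, and then evaluates the same algorithm on the continuation consisting only of small items. The one step you flag as delicate is not: the dense-bin evasion is killed by exactly the count you describe, with no need to optimize the gadget. Consistency caps the number of dense bins at $\alpha\cdot\opt+o(n)$ (each dense bin forces one orphaned large partner into a fresh bin), each dense bin hides at most $k$ items of size $1$ (or $\lfloor k/2\rfloor$ items of size $2$ in your variant, which actually gives a slightly better constant), so at most $k\alpha n$ items avoid singleton bins and the ratio $(1-k\alpha)k/2$ follows directly. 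The only caveat, which affects the paper equally, is that the stated hypothesis $\alpha\le 1/k$ does not suffice for an arbitrary constant $c<1$: the paper's own last line invokes $\alpha\le c/k$, and your version needs $\alpha\le 2c/k$, so the theorem as written should be read with that corrected threshold.
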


\begin{proof}
		For simplicity, we can assume that $n$ is divisible by $k$.
	Suppose that the prediction indicates that in the input sequence $\sigma$, half of the items have size $1$, while the remaining items have size $k-1$, i.e., we have $f'_{x,\sigma}=1/2$, if $x\in \{1,k-1\}$, and $f'_{x,\sigma}=0$, otherwise.
	Define $\sigma_1$ as the sequence that consists of $n$ items of size $1$ followed by $n$ items of size $k-1$, and $\sigma_2$ as the sequence that consists of $n$ items of size $1$ followed by $n$ items of size $1$. 

	Suppose first that the input is $\sigma_1$, then the above-defined prediction is error-free. 
	Moreover, $\opt(\sigma_1) = n$ and hence for an algorithm $A$ to be $(1+\alpha)$-consistent, it must open at most $(1+\alpha)n + o(n)$ bins. Out of these bins, $n$ bins each receive an item of size $k-1$ and possibly an item of size 1.
	Each of the remaining $\alpha n + o(n)$ bins may contain up to $k$ items of size 1; that is, they contain at most $k \alpha n + o(n)$ items of size 1. Therefore, $n- k \alpha n - o(n)$ items of size 1 must each be placed together with items of size $k-1$.	
This implies that $A$ opens at least $(1-k \alpha)n -o(n)$ bins when serving the first $n$ items of size 1. 
	
	Next, suppose that the input is $\sigma_2$. 
	We have ${f}_{1,\sigma_2} = 1$ and ${f}_{k-1,\sigma_2} = 0$, and consequently the error is $\eta=1$. 
		The optimal packing is formed by $2n/k$ bins, each containing $k$ items of size 1; that is, $\opt(\sigma_2) =  2n/k$. On the other hand, by the above observation, the cost of $A$ is at least $(1-k \alpha)n -o(n)$. Hence, the robustness of $A$ is at least $\frac{(1-k \alpha)n -o(n)} {2n/k}$. Given that $\alpha \leq c/k$, it follows that $A$ has robustness at least $(1-c)k/2$.
\end{proof}

We conclude that the robustness of \pp is close-to-optimal and no $(1+\epsilon)$-consistent algorithm can do asymptotically better. It is possible, however, to obtain more general tradeoffs between consistency and robustness, as we discuss in the next section.

\paragraph{Time complexity of \pp}
We bound the overall time complexity of \pp for serving a sequence of $n$ items as a function of $n, k$, and $m$. The initial phase of the algorithm, which involves computing the profile and its optimal packing, runs in time independent of $n$ and does not impact the asymptotic time complexity. It is possible to find the optimal packing of the profile set using efficient exact heuristics such as~\cite {fukunaga2007bin,SchreiberK13}. If faster pre-processing is required, one can replace the exact optimal packing with an approximate packing using simple heuristics like \firstfitdec~\cite{dosa2007tight}, which has a competitive ratio of $11/9$. This will improve the empirical running time, while increasing the number of opened bins by the same ratio. Such an approach is also useful in settings where predictions are updated based on previously served items, and thus the packing of the profile set must be computed periodically. Overall, the worst-case time complexity of \pp is $O(kmn)$. Note that each item is served in amortized time $O(km)$, which is constant since $k$ and $m$ are constants. 

\section{A Broader Class of Algorithms}
\label{sec:hybrid}

In this section, we describe and analyze a more general class of algorithms which offer better robustness in comparison to \pp, at the expense of slightly worse consistency. To this end, we will combine \pp with any algorithm $A$ that has efficient worst-case competitive ratio, in the 
standard online model in which there is no prediction. 
Specifically, we will define a class of algorithms based on a parameter $\lambda\in [0,1]$ and the robust algorithm $A$, and which we denote by \hybrid; we will also say that \hybrid is {\em based} on $A$. Let $a,b\in \mathbb{N}^+$ be such that $\lambda=a/(a+b)$. 
We require that the parameter $m$ in the statement of \pp is a sufficiently large constant, namely 
\begin{align}
m \geq 5\tau_k \max\{k,a+b\}/\epsilon,
\label{eq:mlower}
\end{align}
as it will become clear in the proof of Theorem~\ref{th:hybr-theory}.

Upon arrival of an item of size $x\in [1,k]$, \hybrid marks it as either an item to be served by \pp, or as an item to be served by $A$; we call such an item a {\em PP-item} or an {\em $A$-item}, in accordance to this action. Moreover, for every $x\in [1,k]$, \hybrid maintains two counters: \cnt, which is the number of items of size $x$ that have been served so far, and \ppcnt, which is the number of PP-items of size $x$ that have been served so far. 

We describe the actions of \hybrid. 
Suppose that an item of size $x$ arrives. If there is an empty placeholder of size $x$ in a non-empty bin, then the item is assigned to that bin (and to the corresponding placeholder), and declared PP-item. Otherwise, there are two possibilities: If \ppcnt$\leq \lambda \cdot$ \cnt, then it is served using \pp
and is declared PP-item. If \ppcnt$> \lambda \cdot$ \cnt, then it is served using $A$ and 
declared $A$-item. Thus, $\lambda$ is a measure of the ``involvement'' of the two algorithms in serving a given sequence. For extreme values of $\lambda$, \hybrid reduces to one of its two components: if $\lambda=1$, then \hybrid reduces to \pp, whereas if $\lambda=0$, it reduces to $A$.


Note that in \hybrid, $A$ and \pp maintain their own bin space, so when serving according to one of these algorithms, only the bins opened by the corresponding algorithm are considered. Thus, we can partition the bins used by \hybrid into {\em PP-bins} and {\em $A$-bins}.

\begin{example}\label{ex:3}
	Suppose that $k=10$, $m=20$, and that the predictions $\bm{f'}$ are as in 
       Example~\ref{ex:2}. 
       Figure~\ref{fig:hybrid} illustrates the packing of the 
       \hybrid algorithm that is based on \firstfit as algorithm $A$, with a parameter $\lambda = 0.5$. 
	
	\begin{figure}[htb!]
		\centering
		\scalebox{.8}{\figTwo}
		 \caption{The packing of \hybrid on the input sequence $\sigma = 2,3,1,4,10,2,9,4,6,9,2,6,5$. 
              The predictions and profile are described in Example~\ref{ex:2}. 
              The total cost incurred by the algorithm on $\sigma$ is equal to 9, with \pp and \firstfit contributing 6 and 3 bins, respectively. 
              }
              \label{fig:hybrid} 
	\end{figure}
	
\end{example}

Before presenting the analysis of \hybrid, we review two other approaches towards robustifying \pp based on a given (robust) algorithm $A$, and we show that these approaches do not perform as well as  \hybrid.  Recall that \hybrid will first place an item $x$ to a placeholder of size $x$ if such a placeholder is available. The first approach could be to skip this step; that is, consider an algorithm that serves a fraction $\lambda$ of items of size $x$ as PP-items, and the remaining $1-\lambda$ fraction as $A$-items. To exemplify the problem with this approach, suppose that $k$ is divisible by $5$, and consider a situation where the prediction specifies that half of the items have size $0.6k$, and the other half have size $0.4k$. 
Then, \pp will use two placeholders per profile bin, one of size $0.6k$ and another of size $0.4k$ in the optimal packing of the profile set. Suppose that the input sequence $\sigma$ of length $n$, and it consists only of items of size $0.4k$ and $0.6k$, in non-decreasing order of size. Moreover, the total frequency of items of size $0.4k$ is equal to $0.5+\eta/2$, whereas the total frequency of items of size $0.6k$ is equal to $0.5-\eta/2$, for some $\eta>0$ (hence $\eta$ is the prediction error). Then, the above algorithm opens $\lambda n/2$ bins for PP-items, each including an item of size $0.4k$ and a placeholder of size $0.6k$. Given that the initial check for placeholders is skipped, only $\lambda n (0.5-\eta/2)$ item of size $0.6k$ are placed in these placeholders. That is, in the final packing of the algorithm, 
there are $\lambda \eta n/2$ bins with a single item of size $0.4k$ and an empty placeholder of size $0.6k$. \hybrid, on the other hand, places an item of size $0.6k$ in each of these bins, and thus saves $\lambda \eta n/4$ bins. 

A second approach could be along the lines of~\cite{MahdianNS12}, which describe a general method for combining an \emph{optimistic} algorithm that trusts the prediction (in our context, \pp) and a \emph{pessimistic} algorithm that ignores the prediction (in our context, the online algorithm $A$). The optimistic and pessimistic algorithms optimize for situations where the prediction is perfect and adversarial, respectively. \cite{MahdianNS12} showed that their combined algorithm attains the best of both worlds performance for problems such as load balancing and facility location. Their algorithm serves an input item using the optimistic one, if the optimistic algorithm would have incurred a total cost that would be bounded by a constant factor of the corresponding cost that would be incurred by the pessimistic algorithm, up to that point in time; otherwise, it serves the item using the pessimistic algorithm. In our context, this approach would treat an item as PP-item if the total cost of \pp on the prefix of the input observed so far is within a constant factor $\alpha>0$ of the cost of $A$ on the same prefix. Unfortunately, this approach fails for the bin packing problem. Consider a worst-case example, where half of the items are predicted to be of size $0.4k$, and the remaining half are predicted to be of size $0.6k$. Suppose that this prediction is error-free and all items of size $0.4k$ appear before items of size $0.6k$ in the sequence. Then \pp reserves place-holders of size $0.6k$ in its bins, and its cost on any prefix of the input formed by items of size $0.4k$ is exactly twice as large as the cost of $A$ for on the same prefix: this is because $A$ must place two such items in the same bin, because of its pessimistic nature. We observe that if $\alpha \leq 2$, this combined algorithm treats all items of size $0.4k$ as A-items, and its final packing will be the same as the one of $A$. Similarly, if $\alpha>2$, the algorithm treats all items as PP-items and thus reduces to \pp. In other words, this approach fails to meaningfully combine the two algorithms.

%

%

\subsection{Analysis of \hybrid}
The following theorem establishes the performance of \hybrid.

\begin{theorem}\label{th:hybr-theory}
	For any $\epsilon \in (0,0.5]$ and $\lambda \in[0,1]$, \hybrid has competitive ratio 
	$(1+\epsilon) ((1+(2+5\epsilon)\eta k + \epsilon)\lambda + c_A (1-\lambda) ) $, 
	where $c_A$ is the competitive ratio of the algorithm $A$ on which \hybrid is based. 
\end{theorem}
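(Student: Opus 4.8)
The plan is to split the cost of \hybrid into its PP-bins and its $A$-bins and to bound the two contributions by (roughly) $\lambda$ and $(1-\lambda)$ times the guarantees of \pp and of $A$, respectively. Let $\sigma_{PP}$ and $\sigma_A$ be the subsequences of $\sigma$ formed by the PP-items and the $A$-items. Because \pp and $A$ use disjoint bin spaces, $|\hybrid(\sigma)| = |PP(\sigma_{PP},\bm{f'})| + |A(\sigma_A)|$, so it suffices to bound each term.

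I would first extract what the counters enforce. Fixing a size $x$ and looking at the last item of size $x$ declared an $A$-item, at that point \ppcnt$> \lambda \cdot$ \cnt and every later item of size $x$ is a PP-item; a one-line computation then gives $n^A_x \le (1-\lambda)n_x$, where $n^A_x$ and $n_x$ count the $A$-items and all items of size $x$. This already settles the $A$-part: scaling an optimal packing of $\sigma$ down by the factor $(1-\lambda)$ and rounding each bin-type multiplicity up produces enough placeholders for $\sigma_A$, so $|\opt(\sigma_A)| \le (1-\lambda)|\opt(\sigma)| + \tau_k$; with $|A(\sigma_A)| \le c_A|\opt(\sigma_A)|$ and $|\opt(\sigma)| \ge n/k \to \infty$ this yields $|A(\sigma_A)| \le (1+\epsilon)c_A(1-\lambda)|\opt(\sigma)|$ for $n$ large enough.

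The PP-part is where I expect the real difficulty. The naive route — apply Theorem~\ref{th:consistent} to $\sigma_{PP}$ and then claim $|\opt(\sigma_{PP})| \le \lambda|\opt(\sigma)|$ — is false: a size that is \emph{over-predicted} ($f'_x > f_x$) always finds a free placeholder and is routed entirely to \pp, so $\sigma_{PP}$ is generally not a $\lambda$-fraction of $\sigma$ and $|\opt(\sigma_{PP})|$ may exceed $\lambda|\opt(\sigma)|$. The right quantity to track is instead the number $g^*$ of profile groups \hybrid opens. A new group is created only by a condition-driven PP-item whose size has exhausted its placeholders; such a ``bottleneck'' size $x$ has its PP-count capped by the counter, $n^{PP}_x \le \lambda n_x + \lceil f'_x m\rceil$, while an over-predicted size never becomes a bottleneck (it only refills already-open bins). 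Since a size whose placeholders always suffice cannot attain the maximum defining $g^*$, this maximum is attained at a bottleneck size, giving $g^* = \max_x\lceil n^{PP}_x / \lceil f'_x m\rceil\rceil \le \lambda g' + O(1)$, where $g'$ is the number of groups \pp would open on the \emph{whole} sequence $\sigma$ with the same prediction $\bm{f'}$.

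With $g^*\le \lambda g' + O(1)$ in hand, I would rerun the accounting in the proof of Theorem~\ref{th:consistent} with every group count multiplied by $\lambda$: the first $\lambda g_0$ groups (with $g_0$ the error-free count of Lemma~\ref{th:no-error}) are charged through $p_{\bm{f'}} \le p_{\bm f} + k + \eta m$ and $g_0 p_{\bm f} \le (1+\epsilon)|\opt(\sigma)|$, and the remaining $\le \lambda(g'-g_0)$ groups are charged lazily to differing elements exactly as before, the over-predicted sizes contributing nothing here because they open no groups. This gives $|PP(\sigma_{PP},\bm{f'})| \le \lambda(1+(2+5\epsilon)\eta k + \epsilon)|\opt(\sigma)|(1+o(1))$. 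Summing the two parts and absorbing the $o(1)$ slack — which is exactly why $m$ must be taken to be the larger constant $m \ge 5\tau_k\max\{k,a+b\}/\epsilon$ — into the leading $(1+\epsilon)$ factor yields the stated ratio. The single step I would write with the most care, and the main obstacle, is the group-count inequality $g^* \le \lambda g' + O(1)$ together with the lazy charging: this is precisely the place where the non-uniform PP/$A$ split, and in particular the over-predicted sizes, must be shown not to spoil the $\lambda$-scaling.
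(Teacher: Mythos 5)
Your overall architecture matches the paper's: split the cost into PP-bins and $A$-bins, and charge each against a $\lambda$- (resp.\ $(1-\lambda)$-) fraction of a slightly augmented optimal packing. Your treatment of the $A$-part (via $n^A_x\le n_x-\lfloor\lambda n_x\rfloor$ and a rounded-up $(1-\lambda)$-scaling of $\opt(\sigma)$) is essentially the paper's construction of $N_A$, which instead pads $\opt(\sigma)$ with at most $(a+b)\tau_k$ bins so that every type-multiplicity is divisible by $a+b$ and then takes the $b/(a+b)$ share. Where you genuinely diverge is the PP-part. You correctly identify the obstacle --- over-predicted sizes are routed entirely to \pp, so $\sigma_{PP}$ need not be a $\lambda$-fraction of $\sigma$ --- but you resolve it by re-opening the proof of Theorem~\ref{th:consistent} and re-deriving a group-count bound $g^*\le\lambda g'+O(1)$ together with a $\lambda$-scaled lazy-charging step. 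The paper instead uses a truncation trick that lets it invoke Theorem~\ref{th:consistent} as a black box: define $\sigma'_{PP}$ by deleting, for each size $x$, the trailing PP-items beyond the first $\lfloor\lambda n_x\rfloor$; since any PP-item that opens a new bin must have been routed through the counter test (an item landing in an empty placeholder of a non-empty bin opens nothing), it satisfies \ppcnt$\le\lambda\cdot$\cnt at that moment, so every bin-opening event lies in the retained prefix, whence $|PP(\sigma_{PP},\bm{f'})|=|PP(\sigma'_{PP},\bm{f'})|$ and Theorem~\ref{th:consistent} applied to $\sigma'_{PP}$ (whose optimum is at most $|N_{PP}|$) finishes. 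Both routes are sound; the paper's is shorter and avoids re-verifying the differing-element accounting, while yours makes explicit which sizes actually trigger new profile groups. The one step of yours that still needs to be written out carefully is the charging of groups beyond $\lambda g_0$ to differing elements: the per-size overflow count picks up an additive $O(m)$ term that must be absorbed into the $\epsilon$ slack, which is precisely what the enlarged constant $m\ge 5\tau_k\max\{k,a+b\}/\epsilon$ is there to permit.
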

\begin{proof}
	We define two partitions of the multiset of items in $\sigma$.
	The first partition is $S_{PP} \cup S_A$, where $S_{PP}$ and $S_A$ are the PP-items and A-items of \hybrid, respectively. The second partition is $S'_{PP} \cup S'_{A}$, where $S'_{PP}$ and $S'_A$ are defined such that for any $x \in [1,k]$ there are $\lfloor \lambda n_x \rfloor$ items of size $x$ in $S'_{PP}$ and 
	$n_x - \lfloor \lambda n_x \rfloor$ items of size $x$ in  $S'_{A}$. Given $\opt(\sigma)$, we will define a new packing $N$, such that every bin in $N$ contains only items 
	in $S'_{PP}$ or only items in $S'_{A}$. Let $N_{PP}$ and $N_A$ denote the set of bins in $N$ that include items in $S'_{PP}$ and in $S'_{A}$, respectively. Similarly, let $B_{PP}$ and $B_{A}$ denote the set of bins in the packing of \hybrid that contain only PP-items (PP-bins) and A-items (A-bins), respectively.
	We will prove the following bounds for $N$, $B_{PP}$ and $B_A$:
\begin{itemize}
	\item[(i)] $|N| \leq (1+\epsilon)|\opt(\sigma)|$; 
\item[(ii)]
$|B_{PP}| \leq (1+(2+5\epsilon)\eta k + \epsilon) |N_{PP}|$; \item[(iii)]
$|B_A| \leq c_A |N_{A}|$.
	
\end{itemize}
	%

To prove the above bounds, we first explain how to derive $N$ from a given optimal packing 
$\opt(\sigma)$. This is done in a way that 
$N$ contains the filled bins of $\opt(\sigma)$ and up to $(a+b)\tau_k$ additional filled bins so as to guarantee that, for each bin type in $\opt(\sigma)$, 
the number of bins of each given type in $N$ is divisible by $a+b$; recall that $a,b$ are the smallest integers such that $\lambda = \frac{a}{a+b}$.
Using~\eqref{eq:mlower}, 
we obtain that 
\[ 
|N| \leq |\opt(\sigma)| + (a+b-1)\tau_k < 
|\opt(\sigma)| ( 1+ \tau_kk/m),
\] 
Since the number of bins of each type in $N$ is divisible by $a+b$, we can partition  $N$ into $N_{PP}$ and $N_{A}$ so that $|N_{PP}| \leq a (1+\epsilon) |\opt(\sigma)|/(a+b) $ and $|N_A| \leq b(1+\epsilon)|\opt(\sigma)|/(a+b)$. That is,  $|N_{PP}| \leq \lambda (1+\epsilon) |\opt(\sigma)|$ and $|N_A| \leq (1-\lambda) (1+\epsilon) |\opt(\sigma)|$. Note that $N$ packs not only items in $\sigma$ but also additional items in the added bins. That implies that all items $S'_{PP}$ are packed in $N_{PP}$ and all items in $S'_A$ are packed in $N_A$, and hence (i) follows. 

To prove (ii) and (iii), we note that $S_{A} \subseteq S'_A$ which implies $S'_{PP} \subseteq S_{PP}$. This is because the algorithm declares an item of size $x$ as an A-item only if \ppcnt$> \lambda$ \cnt. 
Hence, at any given time during the execution of \hybrid, the number of A-items of size $x$ is no more than a fraction $(1-\lambda)$ of  \cnt. 

Next, we will show the property (ii). 
First, note that 
$|B_{PP}| = |PP(\sigma_{PP}, \bm{f'})|$, where $\sigma_{PP}$ is the subsequence of $\sigma$ formed by the $PP$-items, and $PP$ abbreviates the output of \pp.
Consider a sequence $\sigma'_{PP}$ obtained by removing, for every $x \in [1,k]$, the last $d_x$ items of size $x$ from $\sigma_{PP}$, where $d_x$ is the number of items of size $x$ in $S_{PP} \setminus S'_{PP}$.  We show next that $|PP(\sigma_{PP},\bm{f'})| = |PP(\sigma'_{PP}, \bm{f'})|$.
For any $x$, consider the last PP-item $L_x$ of size $x$ for which \hybrid opens a new bin. At the time $L_x$ is packed, \ppcnt$\leq \lambda \cdot $\cnt. 
Thus, by removing items of size $x$ that appear after $L_x$ in  $\sigma_{PP}$, the remaining items form a subsequence of $\sigma'_{PP}$, and the number of bins does not decrease. That implies that $|PP(\sigma_{PP}, \bm{f'})| = |PP(\sigma'_{PP}, \bm{f'})|$. 
From Theorem~\ref{th:consistent}, we obtain 
\begin{align*}
|B_{PP}|  &= |PP(\sigma_{PP},\bm{f'})| \\
&= |PP(\sigma'_{PP},\bm{f'})| \\
&\leq (1+(2+5\epsilon)\eta k+\epsilon)|\opt(\sigma'_{PP})| \\
&\leq (1+(2+5\epsilon)\eta k+\epsilon) |N_{PP}|.
\end{align*}

Last, to show (iii), we note that the number of bins that \hybrid opens for items in $S_{A}$ is at most 
$c_A |\opt(S_A)| \leq c_A |\opt(S'_A)| \leq c_A |N_A|$. This is because $S_A \subseteq S'_A$. \

	Using Properties (i)-(iii), we obtain 
\begin{align*}
|{\sc Hybrid}(\sigma,\bm{f'})| &= \ |B_{PP}| + |B_{A}| \\
&\leq  \ (1+(2+5\epsilon)\eta k + \epsilon) |N_{PP}| + c_A |N_{A}|  \\
&\leq   \ (1+(2+5\epsilon)\eta k + \epsilon)\lambda(1+\epsilon) |\opt(\sigma)| + c_A (1-\lambda)(1+\epsilon)|\opt(\sigma)| \\
&=  (1+\epsilon)((1+(2+5\epsilon)\eta k + \epsilon)\lambda + c_A (1-\lambda)) |\opt(\sigma)|,
\end{align*}
which concludes the proof.
\end{proof}

To obtain the best theoretical performance, we can choose $A$ as the algorithm of the best known competitive ratio, that is Advanced Harmonic algorithm~\cite{BaloghBDEL18}. However, as discussed in Section~\ref{sec:prelim}, such algorithms belong to a class that is tailored to worst-case competitive analysis, and do not tend to perform well in typical instances~\cite{kamali2015all}. For this reason, simple algorithms such as \firstfit and \bestfit are preferred in practice~\cite{10.5555/241938.241940}. We obtain the following corollary.

\begin{corollary}\label{coro:hybrid:best2}
	For any $\epsilon \in (0,0.5]$ and $\lambda \in[0,1]$, there is an algorithm with competitive ratio $(1+\epsilon)(1.5783 + \lambda ((2+5\epsilon)\eta k -0.5783 + \epsilon))$. Furthermore, \hybrid based on \firstfit has competitive ratio $(1+\epsilon)(1.7 + \lambda ((2+5\epsilon)\eta k -0.7 + \epsilon))$.
\end{corollary}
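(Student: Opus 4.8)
The plan is to obtain both bounds by direct instantiation of Theorem~\ref{th:hybr-theory}, choosing concrete robust algorithms $A$ and substituting their competitive ratios $c_A$ into the general expression. Recall that Theorem~\ref{th:hybr-theory} guarantees that \hybrid has competitive ratio $(1+\epsilon)\bigl((1+(2+5\epsilon)\eta k + \epsilon)\lambda + c_A(1-\lambda)\bigr)$ for any online algorithm $A$ of competitive ratio $c_A$. The entire task is therefore to simplify the factor inside the outer $(1+\epsilon)$ for two specific choices of $A$, so there is no genuine obstacle beyond a routine regrouping of terms.

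First I would expand the inner factor and collect the coefficient of $\lambda$. Writing
\[
(1+(2+5\epsilon)\eta k + \epsilon)\lambda + c_A(1-\lambda)
= c_A + \lambda\bigl((2+5\epsilon)\eta k + \epsilon + 1 - c_A\bigr),
\]
the remaining step is merely to plug in the two relevant values of $c_A$.

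For the first statement, I would take $A$ to be the algorithm of~\cite{BaloghBDEL18}, which is $1.57829$-competitive; rounding to $c_A = 1.5783$ and substituting into the displayed identity yields $1.5783 + \lambda\bigl((2+5\epsilon)\eta k - 0.5783 + \epsilon\bigr)$, since $1 - 1.5783 = -0.5783$. Multiplying by the outer factor $(1+\epsilon)$ gives exactly the claimed bound $(1+\epsilon)\bigl(1.5783 + \lambda((2+5\epsilon)\eta k - 0.5783 + \epsilon)\bigr)$.

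For the second statement, I would instantiate $A = \firstfit$, which is $1.7$-competitive by~\cite{JoDUGG74}. Substituting $c_A = 1.7$ into the same identity and using $1 - 1.7 = -0.7$ yields $1.7 + \lambda\bigl((2+5\epsilon)\eta k - 0.7 + \epsilon\bigr)$, and multiplying by $(1+\epsilon)$ produces the stated competitive ratio $(1+\epsilon)\bigl(1.7 + \lambda((2+5\epsilon)\eta k - 0.7 + \epsilon)\bigr)$. Both parts thus follow immediately from Theorem~\ref{th:hybr-theory}, so I expect the proof to be a one-line corollary with no hard step; the only point worth flagging is that the preference for \firstfit (despite its worse worst-case ratio) is justified by its strong empirical performance, as noted in the surrounding discussion.
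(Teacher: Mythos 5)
Your proposal is correct and matches the paper's (implicit) derivation: the corollary is obtained exactly by instantiating Theorem~\ref{th:hybr-theory} with $c_A=1.5783$ (the algorithm of~\cite{BaloghBDEL18}) and $c_A=1.7$ (\firstfit), followed by the regrouping $c_A+\lambda\bigl((2+5\epsilon)\eta k+\epsilon+1-c_A\bigr)$ that you carried out. The only point worth noting is that rounding $1.57829$ up to $1.5783$ only weakens the bound, so the substitution remains valid.
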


%

We can do even better if an {\em upper bound} estimation of the error is known to the algorithm. Such an upper bound, which we denote by $H$, may be available depending on the quality of historical data and the characteristics of typical sequences. Specifically, let \haware denote the algorithm which executes {\sc Hybrid}(1), if $H < (c_A-1-\epsilon)/(k (2+5\epsilon))$, and {\sc Hybrid}(0), otherwise. An equivalent statement is that \haware executes \pp if $H < (c_A-1-\epsilon)/(k (2+5\epsilon))$, and $A$, otherwise. The following corollary follows directly from Theorem~\ref{th:hybr-theory} with the
observation that as long as $\eta < (c_A-1-\epsilon)/(k (2+5\epsilon))$, \pp has a competitive ratio that is better than $c_A$.

\begin{corollary}\label{th:h-aware}
	For any $\epsilon \in (0,0.5]$, \haware using algorithm $A$ has competitive ratio $\min\{c_A, 1+ (2+5\epsilon)\eta k+\epsilon\} $, where $c_A$ is the competitive ratio of $A$. In particular, choosing \firstfit 
	as $A$, \haware has competitive ratio
	$\min\{1.7, 1+ (2+5\epsilon)\eta k + \epsilon\}$.
	\label{thm:h-aware}  
\end{corollary}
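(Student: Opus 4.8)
The plan is to obtain the corollary directly from Theorem~\ref{th:hybr-theory}, using the fact that \haware merely selects between the two extreme instantiations of \textsc{Hybrid}: setting $\lambda=1$ in Theorem~\ref{th:hybr-theory} recovers \pp, with competitive ratio $1+(2+5\epsilon)\eta k+\epsilon$ (in agreement with Theorem~\ref{th:consistent}), while $\lambda=0$ recovers the stand-alone algorithm $A$, with competitive ratio $c_A$. The entire task is therefore to show that \haware's switching rule always commits to whichever of these two bounds is the smaller, so that its guarantee is the pointwise minimum of the two.

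First I would pin down the algebraic role of the threshold $H_0:=(c_A-1-\epsilon)/(k(2+5\epsilon))$ that defines \haware. Plugging $\eta=H_0$ into the \pp bound gives $1+(2+5\epsilon)H_0 k+\epsilon=1+(c_A-1-\epsilon)+\epsilon=c_A$, so $H_0$ is exactly the crossover point at which the consistency guarantee of \pp equals $c_A$. Since $1+(2+5\epsilon)\eta k+\epsilon$ is increasing in $\eta$, it lies strictly below $c_A$ for $\eta<H_0$ and is at least $c_A$ for $\eta\ge H_0$.

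I then carry out a two-case analysis, treating the known bound as tight (i.e.\ $\eta=H$, the worst error consistent with $H$). If $H<H_0$, then \haware runs \textsc{Hybrid}$(1)=$\pp, whose ratio $1+(2+5\epsilon)\eta k+\epsilon$ is, by the crossover identity, strictly below $c_A$, and hence equals $\min\{c_A,\,1+(2+5\epsilon)\eta k+\epsilon\}$. If $H\ge H_0$, then \haware runs \textsc{Hybrid}$(0)=A$, whose ratio $c_A$ is at most $1+(2+5\epsilon)\eta k+\epsilon$, and so again equals the claimed minimum. Combining the cases gives the bound $\min\{c_A,\,1+(2+5\epsilon)\eta k+\epsilon\}$, and the \firstfit statement follows by substituting $c_A=1.7$.

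The one point requiring care is conceptual rather than computational: \haware chooses its mode using only the error bound $H$, whereas the stated competitive ratio is phrased in terms of the true error $\eta\le H$. I would make explicit that the minimum is attained when $H$ is tight, which is the worst case over all inputs consistent with the bound, and that within the \pp regime the dependence on the actual $\eta$ is precisely the consistency bound of Theorem~\ref{th:consistent}. Once this identification is fixed, the only genuine verification is the crossover identity $1+(2+5\epsilon)H_0 k+\epsilon=c_A$, which is what ensures the switching rule never selects the worse of the two guarantees.
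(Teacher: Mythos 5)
Your proposal is correct and follows essentially the same route as the paper, which also derives the corollary directly from Theorem~\ref{th:hybr-theory} at $\lambda\in\{0,1\}$ together with the observation that the \pp bound drops below $c_A$ exactly when $\eta<(c_A-1-\epsilon)/(k(2+5\epsilon))$. Your explicit crossover identity and the remark about the $H$-versus-$\eta$ distinction only spell out details the paper leaves implicit.
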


\section{Applications \& Extensions}\label{sec:extensions}

In this section, we discuss extensions and further applications of our algorithms.

\subsection{Virtual Machine Placement}\label{sect:vmsection}
An important application of online bin packing is Virtual Machine (VM) placement in large data centers. Here, 
each VM corresponds to an item whose size represents the resource requirement of the VM, and each bin corresponds to a physical machine (i.e., host) of a given capacity $k$. In the context of this application, the {\em consolidation ratio}~\cite{Mann15} is the number of VMs per host, in typical scenarios. Note that the consolidation ratio is typically much smaller than $k$. 
For example, VMware server virtualization achieves a consolidation ratio of up to 15:1~\cite{VMWareVM}, while Intel's virtualization infrastructure gives a consolidation ratio of up to 20:1~\cite{IntelVM}.

Let $r$ denote the consolidation ratio (but note that this quantity is an integer).
The following result shows that we can express the competitive ratio of \hybrid in Theorem~\ref{th:hybr-theory} so that the capacity $k$ is replaced by the consolidation ratio $r$. We can thus exploit the fact that typically $r$ is much smaller than $k$, and improve the 
theoretical analysis of our algorithms. 

%
\begin{theorem} \label{th:vm}
	Consider an instance of online bin packing with bins of capacity $k$, in which the item sizes are such that at most $r$ items can fit into a bin, for some $r
	\leq k$. Then, 
	for any constant $\epsilon\in(0,0.2]$, and predictions $\bm{f'}$ with error $\eta$, the following hold: \emph{(I)} \pp has competitive ratio at most $1+(2+5\epsilon)\eta r +\epsilon$; and  \emph{(II)} for any  $\lambda \in[0,1]$, \hybrid has competitive ratio 
	$(1+\epsilon) ((1+(2+5\epsilon)\eta r + \epsilon)\lambda + c_A (1-\lambda) ) $, 
	where $c_A$ is the competitive ratio of the algorithm $A$ on which \hybrid is based. 
\end{theorem}

\begin{proof}
	The proof of (I) is identical to that of Theorem~\ref{th:consistent}, except that in the fourth inequality that bounds $|PP(\sigma, {\bf f'})|$, we use the fact that $p_{\bm f}\geq \lceil m/r \rceil$ (instead of $p_{\bm f}\geq \lceil m/k \rceil$), given that at most $r$ items can fit into each bin. Moreover, in all subsequent inequalities in the proof, $k$ is replaced with $r$. The proof of (II) is identical to that of Theorem~\ref{th:hybr-theory}, except that part (ii) of Theorem~\ref{th:hybr-theory} 
	is replaced by $|B_{PP}| \leq (1+(2+5\epsilon)\eta r + \epsilon) |N_{PP}|$, which directly follows from the same arguments and by applying part (I) instead of Theorem~\ref{th:consistent}.
\end{proof}

Similarly, we can generalize Theorem~\ref{th:tradeoff} and obtain the following improved impossibility result. The proof is identical to that of Theorem~\ref{th:tradeoff}, with $k$ replaced by $r$. 

\begin{theorem}\label{th:lowerr}
	Consider an instance of online bin packing with bins of capacity $k$, in which the item sizes are such that at most $r$ items can fit into a bin, for some $r
	\leq k$. Then, for any constant $c<1$, and for any $\alpha\leq c/r$,
	any algorithm with frequency predictions that is $(1+\alpha)$-consistent has robustness at
	least $(1-c)r/2$. \label{th:tradeoffVM}
\end{theorem}

\subsection{Handling Items with Fractional Sizes}\label{sec:frac}

As stated in Section~\ref{sec:prelim}, we assume a discrete model in which items have integral sizes in $[1,k]$. While this is a natural model for many AI applications, our algorithms can also handle {\em fractional} item sizes in $[1,k]$, by treating them as ``special'' items, in the sense that they are not predicted to appear. \pp and \hybrid will then pack these fractional items separately from all integral ones, using \firstfit.
For the analysis in this setting, we need a measure of ``deviation'' of the input sequence $\sigma$ (that may contain fractional items) from a sequence of integral sizes. The first, and perhaps most natural, approach is to define this deviation as the $L_1$ distance between $\sigma$, and the sequence in which each item is rounded to the closest integer in $[0,k]$. However, we show that this definition can be overly restrictive.



\begin{theorem}
	Let $\lfloor x \rceil$ denote the integer closest to $x$, and define
	$d(\sigma) = \sum\nolimits_{x\in \sigma}|x - \lfloor x \rceil|$. Then no online algorithm in the fractional setting can have a competitive ratio better than 4/3, even if all frequency predictions are error-free (that is, $\eta = 0$), and even if $d(\sigma)=\epsilon$, for arbitrarily small $\epsilon>0$.
	\label{thm:dev1}
\end{theorem}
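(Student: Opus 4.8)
The plan is to exhibit an adaptive adversary that exploits the fact that the frequency prediction is defined on the \emph{rounded} sequence $\lfloor \sigma \rceil$, and is therefore entirely blind to sub-$1/2$ perturbations of item sizes that nevertheless dictate how items pack. For concreteness take $k$ even, and use two fractional sizes: \emph{small} items of size $k/2-\delta$ and \emph{large} items of size $k/2+\delta$, for a tiny $\delta>0$. Both round to the integer $k/2$, so \emph{any} sequence built from these two sizes has $f_{k/2}=1$ and $f_x=0$ otherwise; hence the single prediction $f'_{k/2}=1$ is error-free ($\eta=0$) no matter how many of each type appear. Moreover $d(\sigma)=(\text{number of items})\cdot\delta$, so after fixing the length I can pick $\delta$ small enough that $d(\sigma)=\epsilon$ for any prescribed $\epsilon>0$. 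The crux is that a single error-free prediction is consistent with radically different inputs, so the algorithm cannot use it to anticipate the sizes that matter.

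Next I would run the classical two-phase adaptive adversary. Phase~1 presents $N$ small items; since $2(k/2-\delta)\le k<3(k/2-\delta)$, each bin holds at most two of them, and when the phase ends the (arbitrary but fixed) online algorithm has committed to a packing with, say, $p$ bins containing two smalls and $s=N-2p$ bins containing one small, for a phase-1 cost of $N-p$. Because phase~1 is identical in both continuations, $p$ is determined before the adversary decides. The adversary then picks the worse of two continuations. If it \emph{stops}, then $\opt(\sigma)=\lceil N/2\rceil$ while the algorithm pays $N-p$, giving ratio $2-2p/N$. If instead it appends $N$ large items, then no two larges fit together ($k+2\delta>k$), so every large occupies a distinct bin, and these $N$ bins are disjoint from the $p$ two-small bins (whose residual space $2\delta$ cannot absorb a large); thus the algorithm's cost is at least $N+p$, whereas pairing each large with a small yields $\opt(\sigma)=N$, for ratio at least $1+p/N$.

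Finally I would combine the cases. Writing $r=p/N\in[0,1/2]$, the adversary guarantees ratio at least $\max\{2-2r,\,1+r\}$, which is minimized at $r=1/3$, where both branches equal $4/3$; for every other $r$ one branch already exceeds $4/3$. Letting $N\to\infty$ with $\delta=\epsilon/(2N)\to 0$ (which preserves all fit/no-fit inequalities, as they only require $0<\delta<k/6$) kills the ceiling and lower-order terms, yielding an asymptotic competitive ratio of at least $4/3$ while keeping $\eta=0$ and $d(\sigma)=\epsilon$.

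The main obstacle I anticipate is not the min--max computation but guaranteeing $\eta=0$ \emph{simultaneously} for both continuations: since the adversary must commit to $f'$ before observing $p$, it needs a single prediction that is error-free whichever continuation it later selects, which forces both item types to round to the \emph{same} integer and hence forces the perturbation to straddle $k/2$ with $k$ even. For odd $k$, where $k/2$ is a half-integer and the small/large sizes round to the two \emph{distinct} integers $(k-1)/2$ and $(k+1)/2$, the clean single-integer construction fails; here I would instead engineer the two continuations to share a common rounded-frequency vector on these two integers, so that one prediction remains error-free in both, which requires a more delicate multi-size variant. A secondary technical point is to justify rigorously the ``cost $\ge N+p$'' bound under the online no-repacking model, namely that the $p$ fully-paired bins are irrevocably wasted once the larges arrive.
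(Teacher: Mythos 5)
Your proposal is correct and follows essentially the same construction as the paper's proof: $n$ items of size just below $k/2$ followed (or not) by $n$ items just above $k/2$, both rounding to $k/2$ so that the single prediction $f'_{k/2}=1$ is error-free on either continuation, with the same $\max\{2-2r,\,1+r\}$ trade-off (the paper parameterizes by $c=1-r$ and writes it as $\max\{2c,\,2-c\}$) minimized at $4/3$. The odd-$k$ complication you flag is simply sidestepped in the paper by assuming $k$ even, which suffices for a lower bound.
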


 \begin{proof}
	Let $\sigma = \sigma_1 \sigma_2$, where $\sigma_1$ consists of $n$ items of size $0.5-\epsilon/(2n)$, and $\sigma_2$ consists of $n$ items of size $0.5+\epsilon/(2n)$. For simplicity, we assume that $n$ and $k$ are even integers. Suppose also that $\bm{f'}$ is such that $f_{x,\lfloor \sigma \rceil}=1$, if $x=k/2$, and 0, otherwise (i.e., only items of size $k/2$ are predicted to appear in $\sigma$).
	From the definition of error, it also follows that $\eta=0$, and from the definition of the deviation $d$, we have that $d(\sigma)=\epsilon$.

	Let $A$ be any online algorithm, then from the definition of $\sigma$, we have that $A(\sigma_1,{\bm f'}) = cn$, for some $c \geq 1/2$.
	Given that $\opt(\sigma_1) =  n/2$, the competitive ratio of $A$ is at least $2c$. Out of the $cn$ bins of $A(\sigma_1, \bm{f'})$, $n-cn$ bins must have two items,  whereas the remaining $cn - (n-cn) = 2cn - n$ bins must have one item. Any of these remaining bins can each accommodate another item from $\sigma_2$. Therefore, out of the $n$ items in $\sigma_2$, $A$ can pack at most  $2cn-n$ such items in the $cn$ bins opened for $\sigma_1$, and it must place the remaining $n - (2cn-n) = 2n-2cn $ items in separate (new) bins. It follows that $A(\sigma,{\bm f'}) \geq cn + (2n-2cn) = 2n-cn$. Given that $\opt(\sigma) = n$, the competitive ratio of $A$ is therefore at least $2-c$. In summary, the competitive ratio of $A$ is $\max\{2c,2-c\}$, which is minimized at $4/3$ for $c=2/3$.
\end{proof}

In light of the above negative result, a different measure of ``deviation'' can be defined the ratio between the total size of fractional items in $\sigma$ over the total size of all items in $\sigma$. The following theorem shows that this measure can better capture the performance of the algorithm in the fractional setting.

\begin{theorem}
	Define $\hat{d}(\sigma)=\frac{\sum_{x \in \sigma,x \neq \lfloor x \rceil} x}{\sum_{x\in \sigma} x}$.
	Let $A$ be any algorithm with frequency predictions that has competitive ratio $c$ if all items have integral size. Then there is an algorithm $A'$ that has competitive ratio at most 
	$c + 2 \hat{d}(\sigma)$ for inputs with fractional sizes. 
	\label{thm:dev2}
\end{theorem}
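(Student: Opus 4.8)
The plan is to realize $A'$ exactly as the extension sketched just before the statement: split $\sigma$ into its integral subsequence $\sigma_I$ (items with $x=\lfloor x\rfloor$) and its fractional subsequence $\sigma_F$ (items with $x\neq\lfloor x\rfloor$), give each its own bin space, serve $\sigma_I$ with $A$ under the predictions $\bm{f'}$, and serve $\sigma_F$ with \firstfit. Then $|A'(\sigma,\bm{f'})| = |A(\sigma_I,\bm{f'})| + |\firstfit(\sigma_F)|$, and I would bound the two summands separately. For the integral part, since $\sigma_I$ is entirely integral, the hypothesis on $A$ yields $|A(\sigma_I,\bm{f'})|\le c\,|\opt(\sigma_I)|$, and because $\sigma_I$ is a subsequence of $\sigma$ we have $|\opt(\sigma_I)|\le|\opt(\sigma)|$; hence $|A(\sigma_I,\bm{f'})|\le c\,|\opt(\sigma)|$.

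For the fractional part I would invoke the standard volume property of \firstfit: at most one of its bins can be at most half full. Indeed, if two such bins $B_1,B_2$ existed (opened in that order), the first item ever placed into $B_2$ would have size at most $k/2$ and would therefore have fit into the at-most-half-full earlier bin $B_1$, contradicting the \firstfit rule. Writing $W_F=\sum_{x\in\sigma,\,x\neq\lfloor x\rfloor}x$ for the total size of the fractional items and $F$ for the number of \firstfit bins, the $F-1$ bins that are more than half full contribute more than $(F-1)k/2$ to $W_F$, so $F < 2W_F/k + 1$.

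To finish, I would use the trivial lower bound $|\opt(\sigma)|\ge W/k$, where $W=\sum_{x\in\sigma}x$ is the total size of all items (each bin holds total size at most $k$). Since $\hat d(\sigma)=W_F/W$, this gives $2W_F/k = 2\hat d(\sigma)\,W/k \le 2\hat d(\sigma)\,|\opt(\sigma)|$, so $|\firstfit(\sigma_F)| \le 2\hat d(\sigma)\,|\opt(\sigma)| + 1$. Adding the two bounds yields $|A'(\sigma,\bm{f'})| \le (c+2\hat d(\sigma))\,|\opt(\sigma)| + 1$; dividing by $|\opt(\sigma)|$ and letting it grow unbounded (the asymptotic competitive ratio) absorbs the additive constant and gives the claimed ratio $c+2\hat d(\sigma)$.

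The routine steps are the \firstfit volume bound and the $|\opt(\sigma)|\ge W/k$ estimate. The point requiring the most care is the bookkeeping for the integral part: the guarantee for $A$ is stated for fully integral inputs and is applied here to the subsequence $\sigma_I$ rather than to $\lfloor\sigma\rceil$. I would make explicit that $A$ is $c$-competitive on every integral input (with $c$ possibly depending on the prediction error), so that feeding it the all-integral sequence $\sigma_I$ is legitimate, and that the term $2\hat d(\sigma)$ captures precisely the overhead of handling the fractional items outside of $A$.
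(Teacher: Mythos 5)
Your proposal is correct and follows essentially the same route as the paper's proof: decompose $\sigma$ into $\sigma_I$ and $\sigma_F$, serve $\sigma_F$ separately with \firstfit, bound $FF(\sigma_F)$ by $2S(\sigma_F)/k+1$ via the standard volume argument, and convert this to $2\hat d(\sigma)\opt(\sigma)+1$ using $\opt(\sigma)\ge S(\sigma)/k$. Your handling of the integral part ($|A(\sigma_I,\bm{f'})|\le c\,|\opt(\sigma_I)|\le c\,|\opt(\sigma)|$) is in fact stated more carefully than in the paper.
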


 \begin{proof}
	Let $\sigma_I$ and $\sigma_F$ be the subsequences of $\sigma$ formed by integer and fractional items, respectively. We can write $A(\sigma) = A(\sigma_I) + FF(\sigma_F)$, where $FF(\sigma_F)$ denotes the number of bins opened by \firstfit when serving $\sigma_F$. For the number of bins opened for integer items, we have $A(\sigma_I) \leq A(\sigma) \leq c \cdot  \opt(\sigma)$. Let $S(\sigma)$ and $S(\sigma_F)$ denote the total size of items in $\sigma$ and $\sigma_F$, respectively, that is $S(\sigma) = \sum_{x\in \sigma} x$, and $S(\sigma_F) = \sum_{x \in \sigma,x \neq \lfloor x \rfloor} x$. 
	From definition, we have $\hat{d}(\sigma) = S(\sigma_F)/S(\sigma)$.
	Note that $FF(\sigma_F) \leq 2 S(\sigma_F)/k + 1$; this is because any pair of consecutive bins contains items of total size $k/2$ or larger. Therefore, 
       \[
       FF(\sigma_F) \leq 2 \hat{d}(\sigma) S(\sigma)/k + 1 \leq 2 \hat{d}(\sigma)\opt(\sigma) + 1.
       \] 
       In summary, we have $A(\sigma) \leq c \cdot \opt(\sigma) + 2\hat{d}(\sigma)\opt(\sigma) +1 $, therefore the (asymptotic) competitive ratio of $A$ is at most $c + 2\hat{d}(\sigma) $.
\end{proof}

\begin{example}
Suppose that the prediction specifies that half of the items are of size 4 and the remaining half are of size 6. Suppose also that the input $\sigma$ consists of $n/2$ items of size 4, $9n/20$ items of size 6, and $n/20$ items of size $6.1$.
Then, $\hat{d}(\sigma)=\frac{\frac{n}{20} \cdot 6.1}{\frac{n}{2} \cdot 4 + \frac{9n}{20} \cdot 6 + 
\frac{n}{20} \cdot 6.1} = \frac{6.1}{101.1} <0.061$.  Theorem~\ref{thm:dev2} shows that the worst-case, asymptotic competitive ratio of the algorithm cannot exceed $c+0.0122$ in the fractional setting. 
\end{example}

\subsection{A Sampling-based Algorithm for Online Bin Packing}
\label{subsec:sampling}

Our analysis of \pp, as stated in Theorem~\ref{th:consistent}, in conjunction with the PAC-learnability of frequency predictions, can help obtain a {\em sampling-based} algorithm with an efficient tradeoff between the number of sampled items and its attained competitive ratio. More precisely, consider the setting in which the online algorithm is allowed to observe $s$ items of the request sequence, and we would like to express its (asymptotic) competitive ratio as a function of $s$. Similar types of sampling-based competitive analysis have recently attracted attention in the context of other online problems such as ski rental and prophet inequalities~\cite{DBLP:conf/icml/DiakonikolasKTV21}, matching~\cite{DBLP:conf/soda/KaplanNR22}, and network optimization problems~\cite{neurips22:sampling}. 

Given any small constant $\epsilon>0$, define $\delta=1/\sqrt{2^{s\epsilon^2-k}}$. Let ON$^*$ denote the best online algorithm in the standard setting, which is currently the Advanced Harmonic algorithm~\cite{BaloghBDEL18} with competitive ratio 1.5783. We define {\sc Random-Mix} to be the algorithm that works as follows: With probability $\delta$, {\sc Random-Mix} executes ${\text ON}^*$, whereas, with probability $1-\delta$, it executes \pp. The analysis of this algorithm follows directly from Theorem~\ref{th:consistent} 
and Remark~\ref{remark:pac}. 

\begin{corollary}
For any constant $\epsilon>0$ and $k \in \mathbb N^+$, {\sc Random-Mix} with $s$ samples has expected competitive
ratio
$
1.5783\cdot \delta +(1-\delta)(1+(2+5\epsilon)\eta k+\epsilon),  
$
where $\delta=1/\sqrt{2^{s\epsilon^2-k}}$.
\label{cor:random}
\end{corollary}

Note that Corollary~\ref{cor:random} bounds the expected competitive ratio of a randomized algorithm which commits to its choice (that is, it executes either ${\text ON}^*$ or \pp, and this decision is made once the sample is revealed). In contrast, Corollary~\ref{coro:hybrid:best2} expresses the competitive ratio of a deterministic algorithm which judiciously switches between ${\text ON}^*$ and \pp throughout its execution, in order to achieve deterministic guarantees.


\section{Experimental Evaluation}
\label{sec:experiments}
In this section, we present an experimental evaluation of the performance of our algorithms\footnote{The code on which the experiments are based is available at \url{https://github.com/shahink84/BinPackingPredictions}.}. Specifically, in Section~\ref{subsec:benchmarks} we describe the benchmarks and the input generation model; in Section~\ref{exp:error}, we expand on the predictions and error measurement; and in Section~\ref{subsec:discussion}, we present and discuss the main experimental results. In addition, in Section~\ref{exp:profile} we report further experiments on the profile size, and in Section~\ref{sect:AvgExp} we provide further methodology for reporting the average performance of our algorithms over multiple runs. Last, in Section~\ref{sec:dynamic}, we study the performance of our algorithms in dynamic settings in which the input is generated from an evolving distribution. 

\subsection{Benchmarks and Input Generation}
\label{subsec:benchmarks}
Several benchmarks have been used in previous work on {\em offline} bin packing; we refer to the discussion by~\cite{CastineirasCO12} for a list of related work. Many of these previous benchmarks typically rely on either uniform or normal distributions. There are two important issues to take into account. First, such simple distributions are often unrealistic and do not capture typical applications of bin packing such as resource allocation, as observed in~\cite{Gent98}. Second, in what concerns online algorithms, simple algorithms such as \firstfit and \bestfit are very close to optimal for input sequences generated from uniform distributions~\cite{10.5555/241938.241940} and very often outperform, in practice, many online algorithms of better competitive ratio~\cite{kamali2015all}. 

We evaluate our algorithms on two types of benchmarks. The first type is based on the {\em Weibull}
distribution, which was first proposed in~\cite{CastineirasCO12} as a model of several real-world applications of bin packing, e.g., the 2012 ROADEF/EURO Challenge on a data center problem provided by Google and several examination timetabling problems. 
The Weibull distribution is specified by two parameters: the \emph{shape} parameter $sh$ and the \emph{scale} parameter $sc$ (with $sh,sc >0$).  The shape parameter defines the spread of item sizes: lower values indicate greater skew towards smaller items. The scale parameter 
represents the statistical dispersion of the distribution. In our experiments, we chose $sh \in [1.0,4.0]$. This is because values outside this range result in trivial sequences with items that are generally too small (hence easy to pack) or too large (for which any online algorithm tends to open a new bin). The scale parameter is not critical, since we scale items to the bin capacity, as we will discuss later; we thus set $sc=1000$, in accordance with~\cite{CastineirasCO12}. 

%

The second type of benchmarks is generated from the BPPLIB library~\cite{benchmarks}, a collection of bin packing benchmarks used in various works on (offline) algorithms for bin packing. 
In particular, we report results on the benchmarks ``GI"~\cite{GschwindI16},  ``Schwerin"~\cite{schwerin1997bin}, ``Randomly\_Generated"~\cite{delorme2014bin}, ``Schoenfield\_Hard28"~\cite{schoenfield2002fast} and ``W{\"a}scher"~\cite{wascher1996heuristics}.




We fix the size of the sequence to $n=10^6$.
We set the bin capacity to $k=100$, and we also scale down each item to the closest integer in $[1,k]$. 
This choice is relevant for applications such as Virtual Machine placement (Section~\ref{sect:vmsection}), as explained in Section~\ref{sect:vmsection}. 
We generate two classes of input sequences.
%
%
For Weibull benchmarks, the input sequence consists of items generated independently and uniformly at random, and the shape parameter is set to $sh=3.0$. For BPPLIB benchmarks, we first select a file of
the benchmark uniformly at random, then generate input items from the chosen file, again uniformly at random.

\subsection{Compared Algorithms, Predictions and Error}
\label{exp:error}

We evaluate \hybrid for $\lambda \in \{0, 0.25, 0.5, 0.75, 1 \}$, based on \firstfit. This means that {\sc Hybrid}(0) is identical to \firstfit, whereas {\sc Hybrid}(1) is identical to \pp. We fix the size of the profile set to $m=5000$. To simplify the 
implementation of \pp, we use the algorithm \firstfitdec~\cite{10.5555/241938.241940} to compute the profile packing, instead of an optimal algorithm. Specifically, \firstfitdec first sorts items in the non-increasing order of their sizes and then packs the sorted sequence using \firstfit.  Using \firstfitdec helps reduce the time complexity, and \remove{, in particular with regards to \adaptive, which must compute a new profile packing
	every time it updates the frequency prediction. }the results only improve by using an optimal algorithm for profile packing, instead. 

We generate the frequency predictions to \hybrid as follows: For a parameter $b \in \mathbb{N}^+$, we define the predictions $\bm{f'}$ as $\bm{f_{\sigma[1,b]}}$. In words, we use a prefix of size $b$ of the input $\sigma$ so as to estimate the frequencies of item sizes in $\sigma$.  In our experiments, we consider 100 different prefix sizes. More precisely, we consider all $b$ of the form 
$b=\lfloor 100 \cdot 1.05^i \rfloor$, with $i \in [25,125]$.
We define the prediction error $\eta$ as the $L_1$ distance between the predicted and the actual frequencies. Note that for a given input sequence, $\eta$ is a function of the prefix size $b$. Since we consider 100 distinct values for $b$, for each sequence we consider up to 100 possible error values. As expected from Remark~\ref{remark:pac}, the prediction error decreases with $b$.

As explained earlier, \firstfit and \bestfit perform very well in practice, and we use them as benchmarks for comparing our algorithms. As often in offline bin packing, we also report the {\em L2 lower bound}~\cite{MartelloT90,fukunaga2007bin} as a lower-bound estimation of the optimal {\em offline} bin packing solution. That is, no algorithm, online or offline, can perform better than this lower bound.


\begin{figure*}[!htb] 
	\ \vspace*{.5cm} \\
	\centering
	\begin{subfigure}[b]{0.495\textwidth}
		\includegraphics[width=1\linewidth,trim = 0mm 6mm 0mm 2cm, clip]{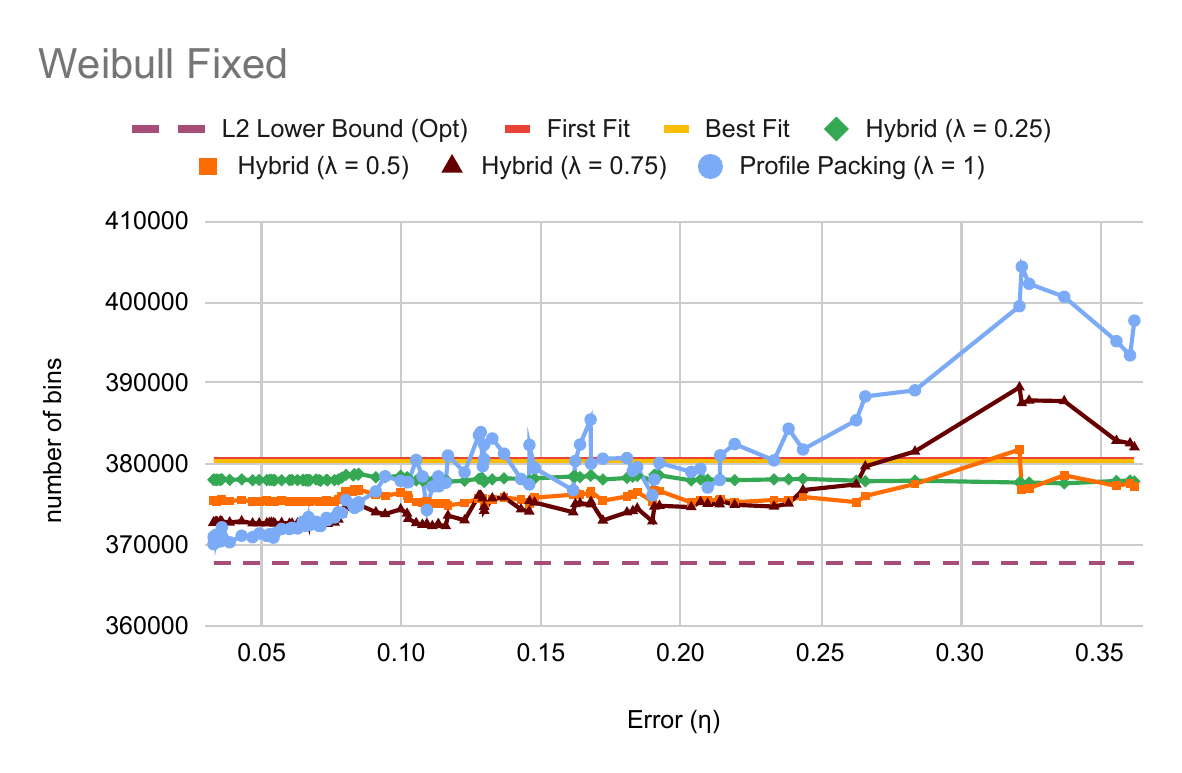}
		\caption{\scriptsize Weibull distribution.}
		\label{fig:fixed.WB}
	\end{subfigure}   
	\hfill 
	\begin{subfigure}[b]{0.495\textwidth}
		\includegraphics[width=1\linewidth, trim = 0mm 6mm 0mm 2cm, clip]{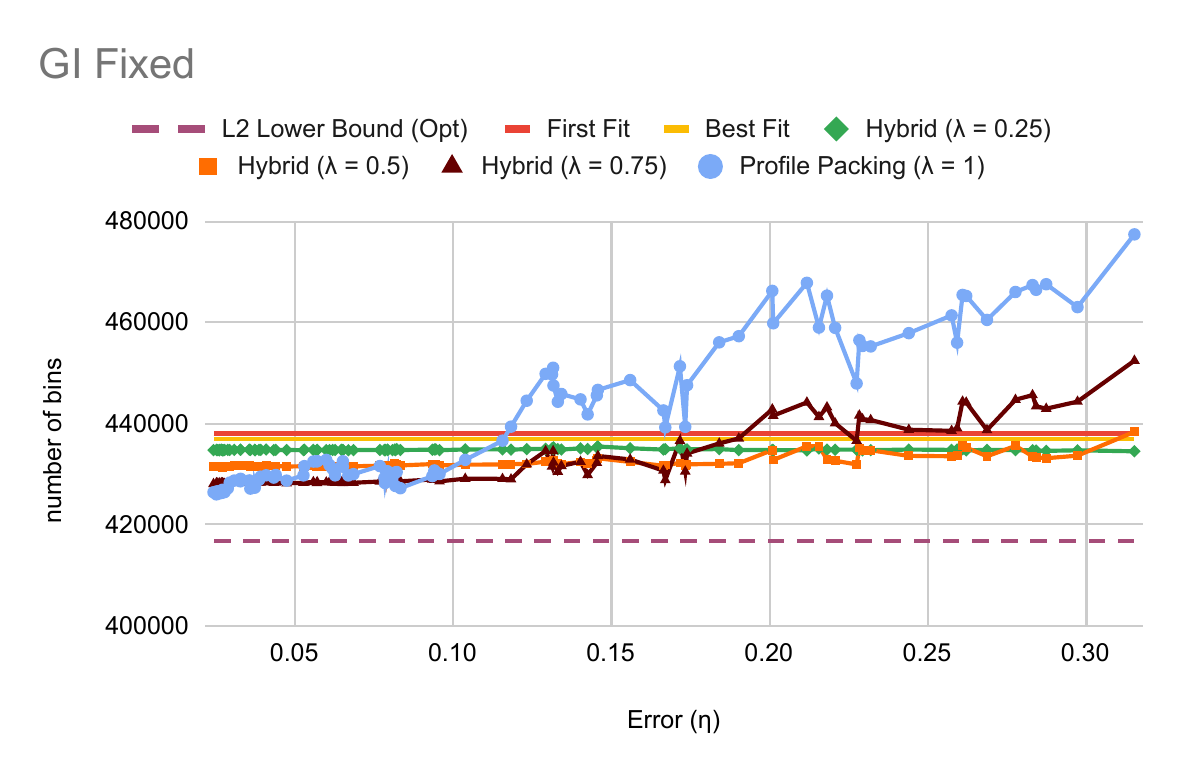}
		\caption{\scriptsize GI benchmark from BPPLIB.}
		\label{fig:fixed.GI}
	\end{subfigure}
	\ \vspace*{.3cm} \\
	\begin{subfigure}[b]{.495\textwidth}
		\includegraphics[width=1\linewidth,trim = 0mm 0mm 0mm 2cm, clip]{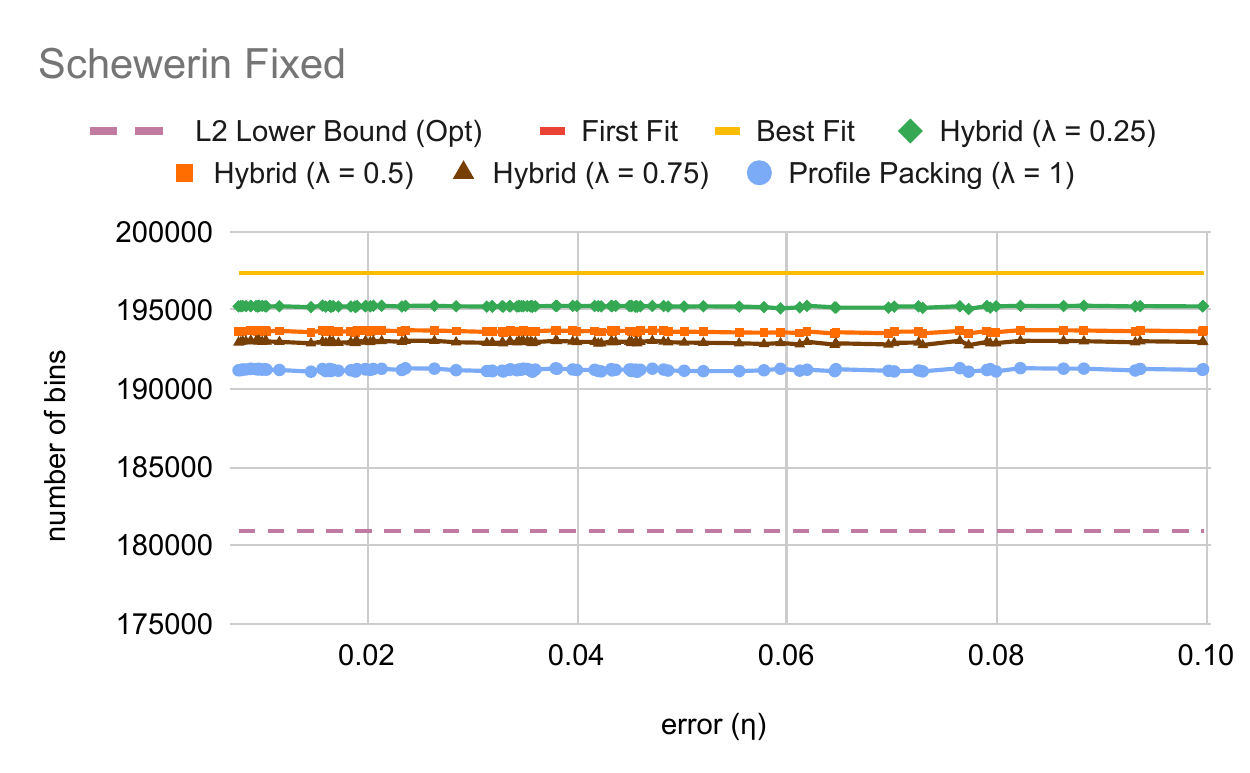} 
		\caption{\scriptsize Shwerin benchmark from BPPLIB.}
		\label{fig:fixedSchwerin}
	\end{subfigure}
	\begin{subfigure}[b]{.495\textwidth}
		\includegraphics[width=1\linewidth, trim = 0mm 0mm 0mm 2cm, clip]{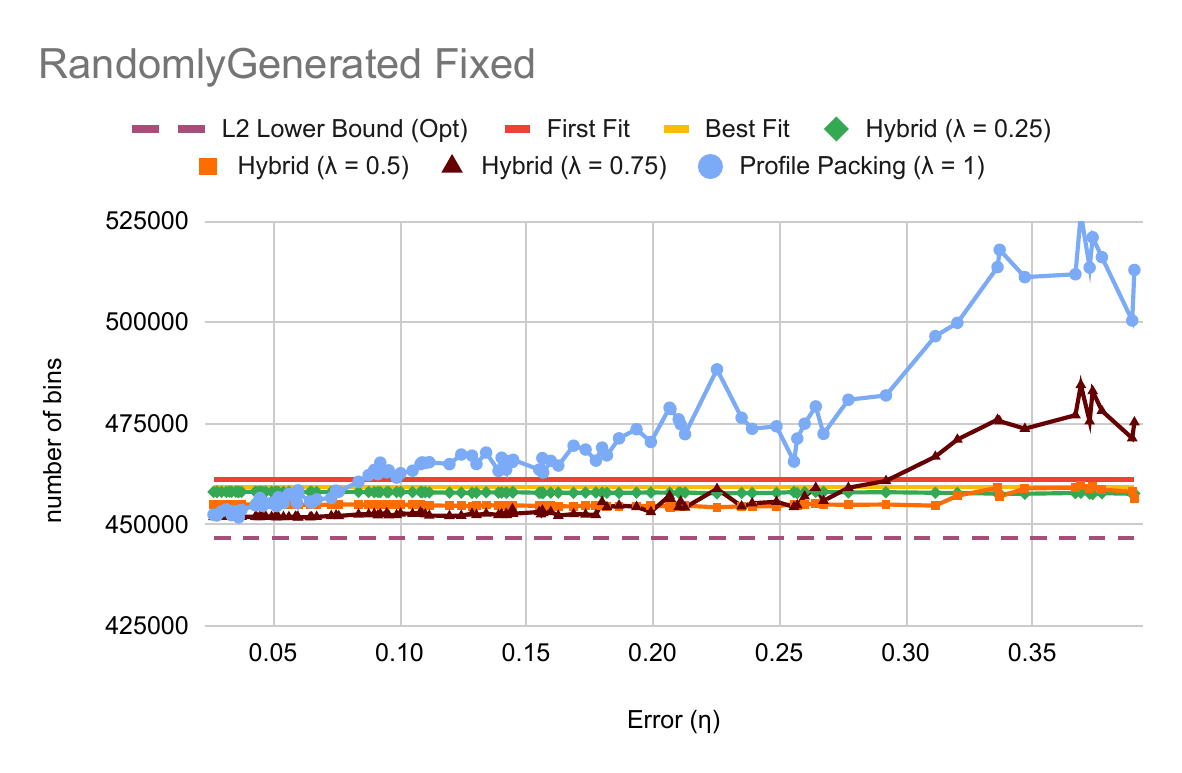}
		\caption{\scriptsize Randomly\_Generated benchmark from BPPLIB.}
		\label{fig:fixedRand}
	\end{subfigure}
	\ \vspace*{.3cm} \\
	\begin{subfigure}[b]{.495\textwidth}
		\includegraphics[width=1\linewidth, trim = 0mm 0mm 0mm 2cm,clip]{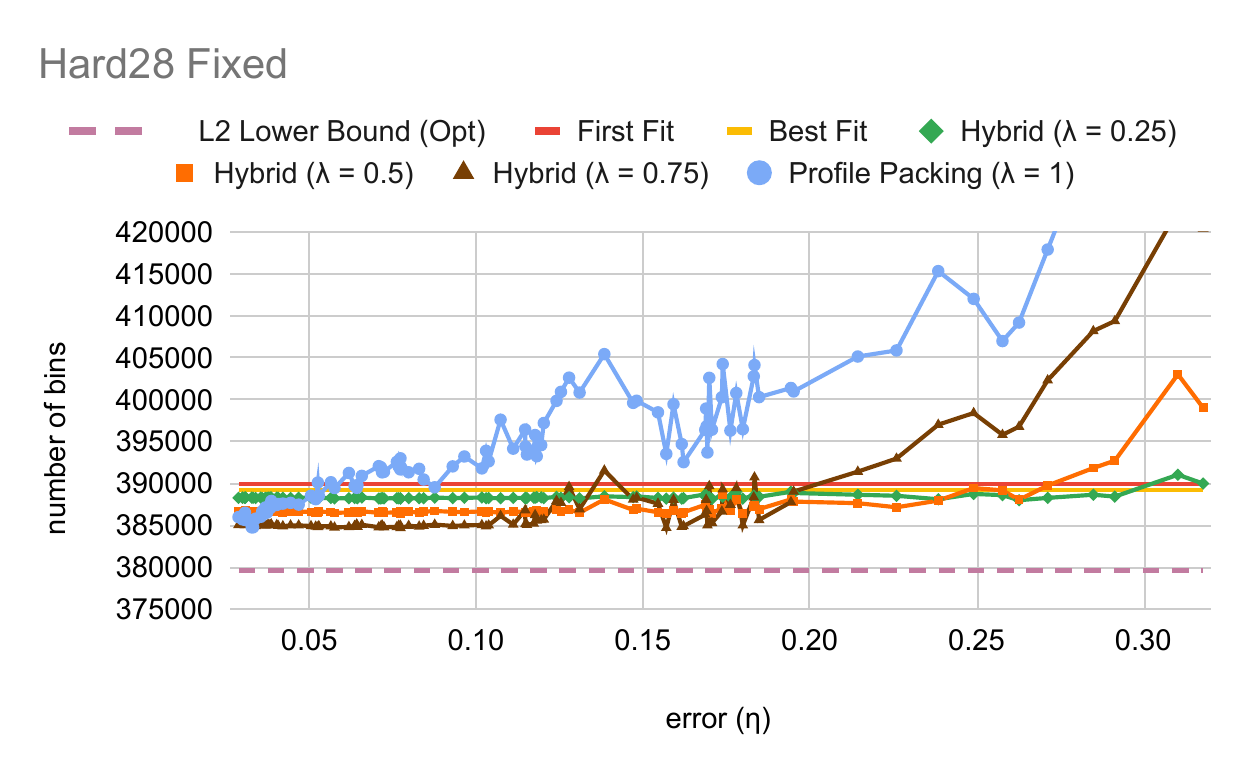}
	\caption{\scriptsize Schoenfield\_Hard28 benchmark from BPPLIB.}
	\label{fig:fixedhard}
\end{subfigure}
\begin{subfigure}[b]{.495\textwidth}
	\includegraphics[width=1\linewidth, trim = 0mm 0mm 1cm 2cm, clip]{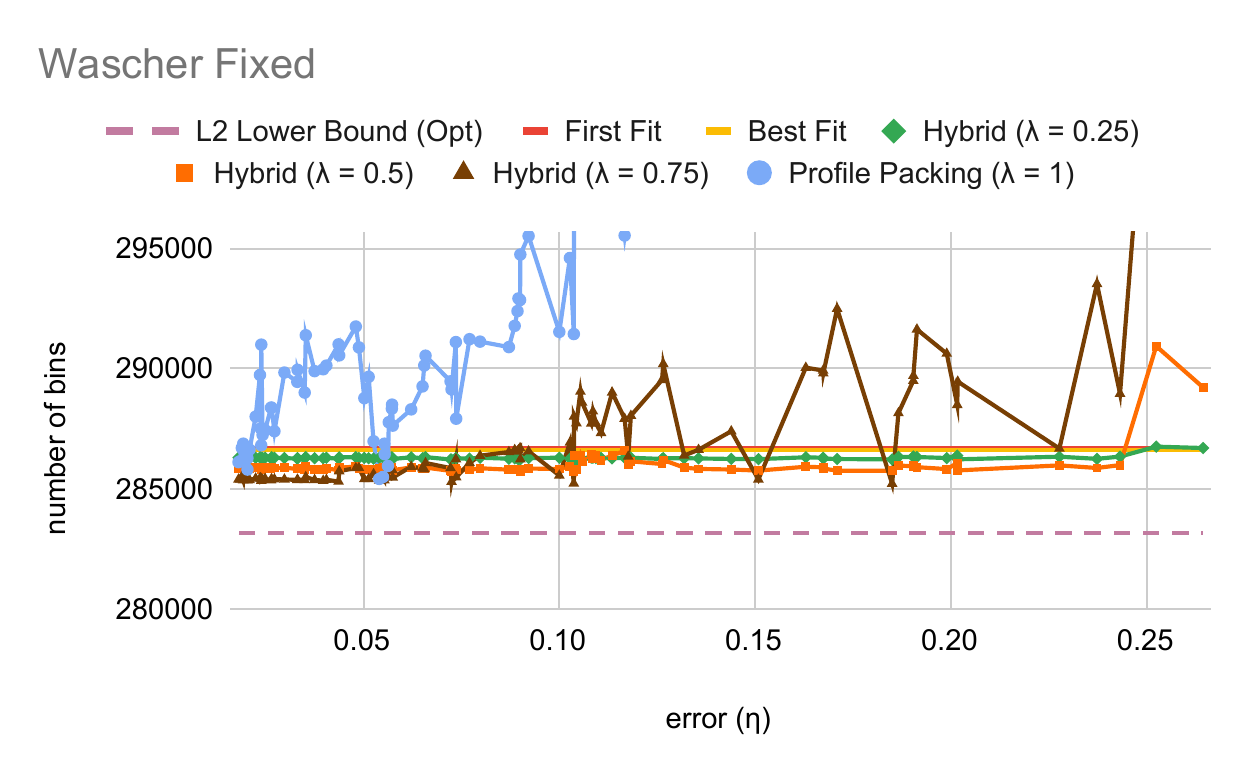}
\caption{\scriptsize W{\"a}scher benchmark from BPPLIB.}
\label{fig:fixedwascher}
\end{subfigure}
\caption{Number of opened bins for sequences from a given distribution. For the purpose of visualization, some of the plots are truncated, e.g., the plot of \pp in (c) and (d).}
\label{fig:fixedOrg}
\end{figure*}

\subsection{Results and Discussion}

\label{subsec:discussion}

Figure~\ref{fig:fixedOrg} depicts the cost of the algorithms for a typical sequence, as a function of the prediction error. The chosen files are ``csBA125\_9" (for ``GI"), ``Schwerin2\_BPP32"  (for ``Shwerin"), ``BPP\_750\_50\_0.1\_0.8\_2" (for ``Randomly\_Generated"), ``Hard28\_BPP832" (for ``Schoenfield\_Hard28"), and ``Waescher\_TEST0082" (for “W{\"a}scher”). 
Here, we consider a single sequence, as opposed to averaging over multiple sequences, because each input sequence is associated with its own prediction error, for any given prefix size (and naively averaging over both the cost and the error may produce misleading results). We can use a single sequence because the input size is considerable ($n=10^6$), and the distribution is fixed. Nevertheless, in Section~\ref{sect:AvgExp} we explain how to properly average over multiple sequences, and we report similar plots and conclusions.  

For all benchmarks, we observe that \pp $(\lambda=1)$ degrades quickly as the error increases, even though it has very good performance for small values of error. As $\lambda$ decreases, we observe that \hybrid becomes less sensitive to error, which confirms the statement of Corollary~\ref{coro:hybrid:best2}. 

Specifically, we observe that for the Weibull benchmarks, \hybrid dominates both \firstfit and \bestfit for all $\lambda \in\{0.25, 0.5, 0.75\}$ and for all $\eta<0.27$, approximately. For the GI benchmarks, \hybrid dominates \firstfit and \bestfit for $\lambda \in\{0.25, 0.5\}$, and for practically all values of error.
In the ``Shwerin" benchmark, all items have sizes in the range $[15,20]$. As such, very good predictions can be obtained by observing a tiny part of the input sequence, i.e., for small values of the prefix size $b$. 
In particular, the smallest value of $b$, namely $b=391$ results in $\eta < 0.099$.
As illustrated in Figure~\ref{fig:fixedSchwerin}, the smaller the parameter $\lambda$, the better the performance of \hybrid; in particular, \pp performs the best, which suggests that for inputs from a small set of item sizes, it is beneficial to choose a small value of $\lambda$. This can be explained by the fact that the prediction error is relatively smaller for these types of inputs.
This finding can be useful in the context of applications such as VM placement: this is because there is only a small number of different VMs that can be assigned to any given physical machine, as we discussed in  Section~\ref{sect:vmsection}.
For the remaining benchmarks, namely ``Randomly\_Generated", ``Schoenfield\_Hard28", and “W{\"a}scher”, the relative performance of the algorithms is similar to that for the GI benchmark, with the difference that the divergence of the algorithms becomes observable at different values of the prediction error.


The results demonstrate that frequency-based predictions indeed lead to performance gains. Even for very large prediction error (i.e., a prefix size as small as $b=338$) \hybrid with $\lambda \leq 0.5$ outperforms both \firstfit and \bestfit, therefore the performance improvement comes by only observing a tiny portion of the input sequence.


\subsection{Experiments on the Profile Size}
\label{exp:profile}
In previous experiments, we assumed that the profile size is $m=5000$. In this section, we report experiments on other values of $m$. More precisely, we evaluated the performance on two random sequences of length $n=10^6$ in which the item sizes are generated using Weibull distribution (with $sh=3$) and the GI-benchmark, respectively, as detailed in Section 6.2. As before, we choose $k=100$. Predictions are generated based on a prefix of length $b=1000$ of the input; this resulted in error values of 
$\eta = 0.1922$ and $\eta = 0.2045$ for the Weibull and GI-instances, respectively. 
We run \hybrid ($\lambda \in \{0.25, 0.5, 0.75, 1\}$) for $100$ different values of $m$, equidistant in the interval $[100,100100]$.

\begin{figure*}[t!]
\ \vspace*{.75cm} \\
\centering
\begin{subfigure}[b]{.495\textwidth}
\includegraphics[width=\linewidth,trim = 0mm 5mm 0mm 2cm, clip]{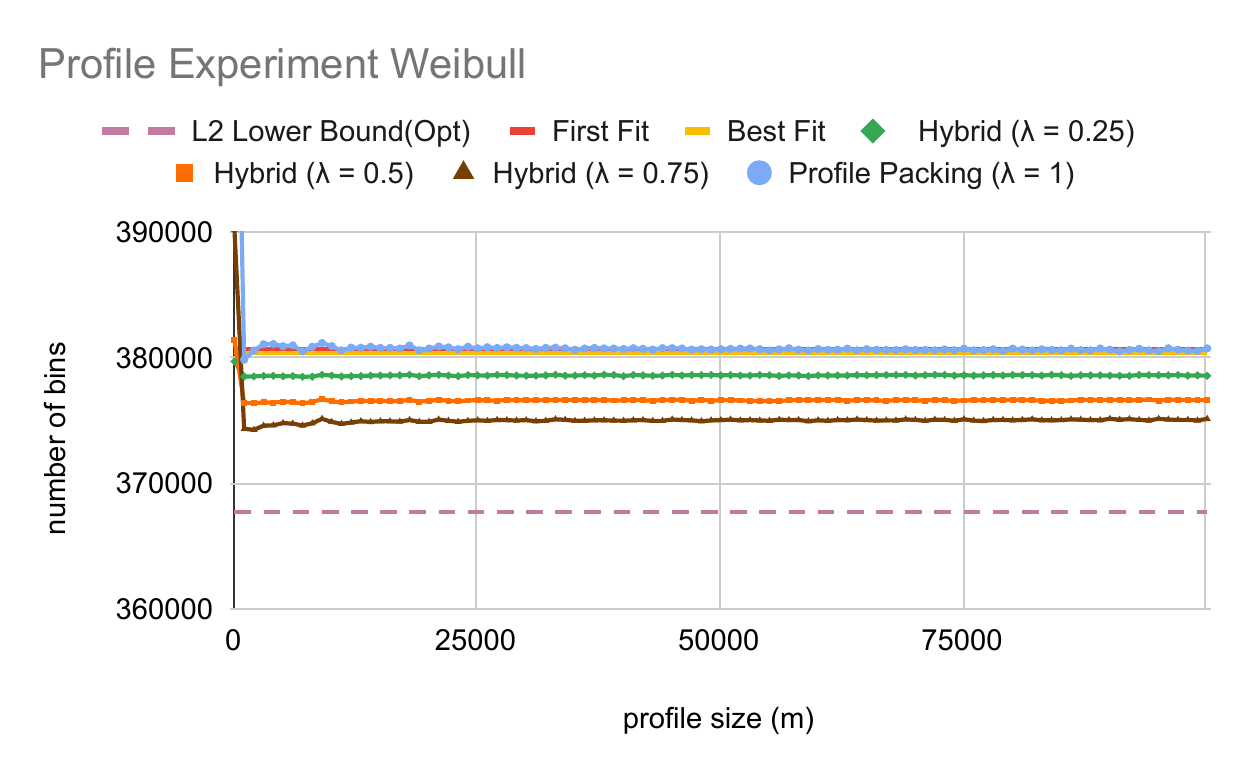} 
\caption{\scriptsize Weibull benchmark.}
\label{fig:dyn.Weibull}
\end{subfigure}
\hfill 
\begin{subfigure}[b]{.495\textwidth}
\includegraphics[width=\linewidth, trim = 0mm 5mm 0mm 2cm, clip]{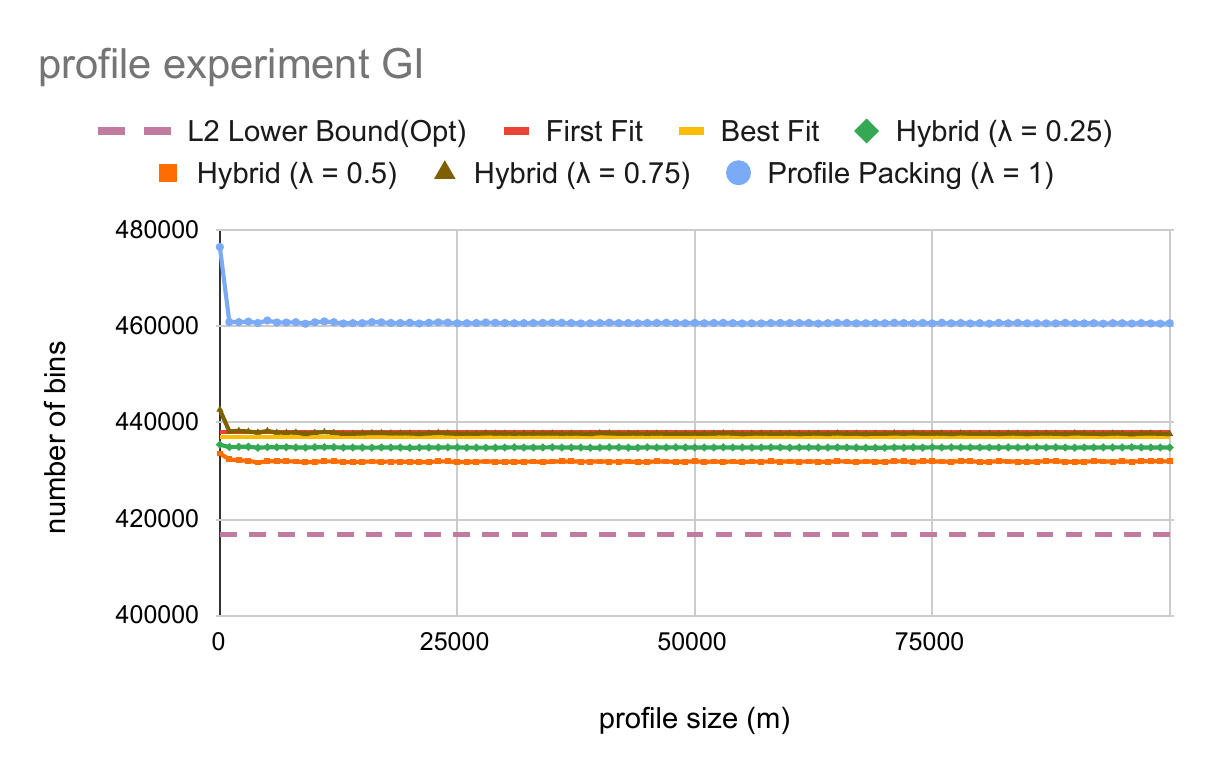}
\caption{\scriptsize GI benchmark from BPPLIB.}
\label{fig:ProfileFix}
\end{subfigure}
\caption{Number of opened bins as a function of the profile size.}
\label{fig:profile}
\end{figure*}

Figure~\ref{fig:profile} depicts the number of bins opened by the algorithms. The experiments show that the parameter $m$ has little impact on the performance of \hybrid, that is, as long as $m$ is sufficiently large (e.g., when $m \geq 1000$), the performance of \hybrid is consistent and independent of the choice of $m$.

\subsection{Experiments on the Average Cost and Rounded Error}
\label{sect:AvgExp}

In the experiments that we discussed in Section~\ref{subsec:discussion}, we reported the performance of the algorithm on a typical sequence. More precisely, we considered a single randomly generated sequence, as opposed to averaging the cost of the algorithm over multiple input sequences, because each input sequence is associated with its own prediction error, for any given size of the prefix (and averaging naively over both the cost and the error, simultaneously, may produce misleading results). We argued that this should not be an issue, because the input sequence is of considerable size ($n=10^6$). 

In this section, we present further experimental results based on averaging over both the cost and the error which give further justification for this choice. 
Our setting here is as follows: Given a fixed distribution (either Weibull with
$sh=3$, or a file from the GI Benchmark), we generate 20 random sequences of length $10^6$.
For each sequence, we compute \firstfit, \bestfit, and the $L2$ lower bound. The average costs of these algorithms, over the 20 sequences, serve as the benchmark costs for comparison.

For \hybrid, and every $\lambda \in [0.25, 0.5, 0.75,1]$, we generate predictions for $100$ values of 
the prefix size $b$ (where recall that $b$ is of the form $b = 100 \cdot 1.05^i$, with $i \in[25,125]$). Consider a sequence $\sigma$. For each of the above predictions for $\sigma$, we compute the prediction error as well as the cost of \hybrid on $\sigma$ with the corresponding prediction and store a pair of the form ({\sc error}, {\sc cost}), where {\sc error} is the error with a two-digit decimal precision, and the cost is the cost of the algorithm. For example, if {\sc error} = 0.2341 and {\sc cost} = 143000, we store the pair (0.23, 143000). This means that for a fixed sequence, we store up to 100 such pairs (assuming $\eta < 1$). Last, we evaluate the average of pairs with the same rounded error over the 20 sequences. For example, if for $\sigma_1$ 
we have obtained the pair (0.23,100000), for $\sigma_2$ the pair (0.23, 150000), and for $\sigma_3$ the 
pair (0.23, 350000), then we take the average as the pair (0.23, 200000). 

Figure~\ref{fig:average} depicts the plots obtained by this method, for both the Weibull and the GI benchmarks. We observe that \hybrid exhibits similar performance tradeoffs as the plots for a single sequence (Figure~\ref{fig:fixedOrg}), but the differences are less pronounced due to averaging.
\begin{figure*}[t]
\centering		\ \\ \vspace{.8cm}
\begin{subfigure}[b]{.495\textwidth}
\includegraphics[width=\linewidth,trim = 0mm 5mm 0mm 2cm, clip]{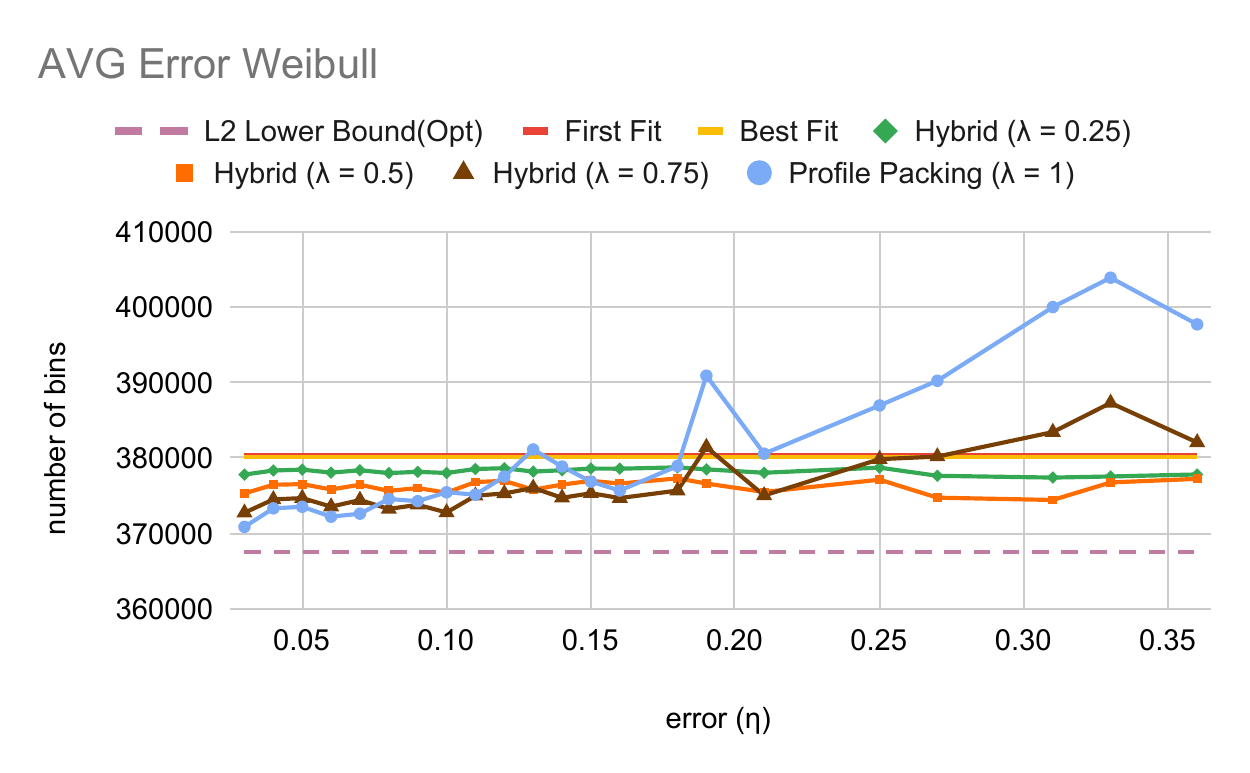} 
\caption{\scriptsize Weibull benchmark.}
\label{fig:average.weibull}
\end{subfigure}
\hfill 
\begin{subfigure}[b]{.495\textwidth}
\includegraphics[width=\linewidth, trim = 0mm 5mm 0mm 2cm, clip]{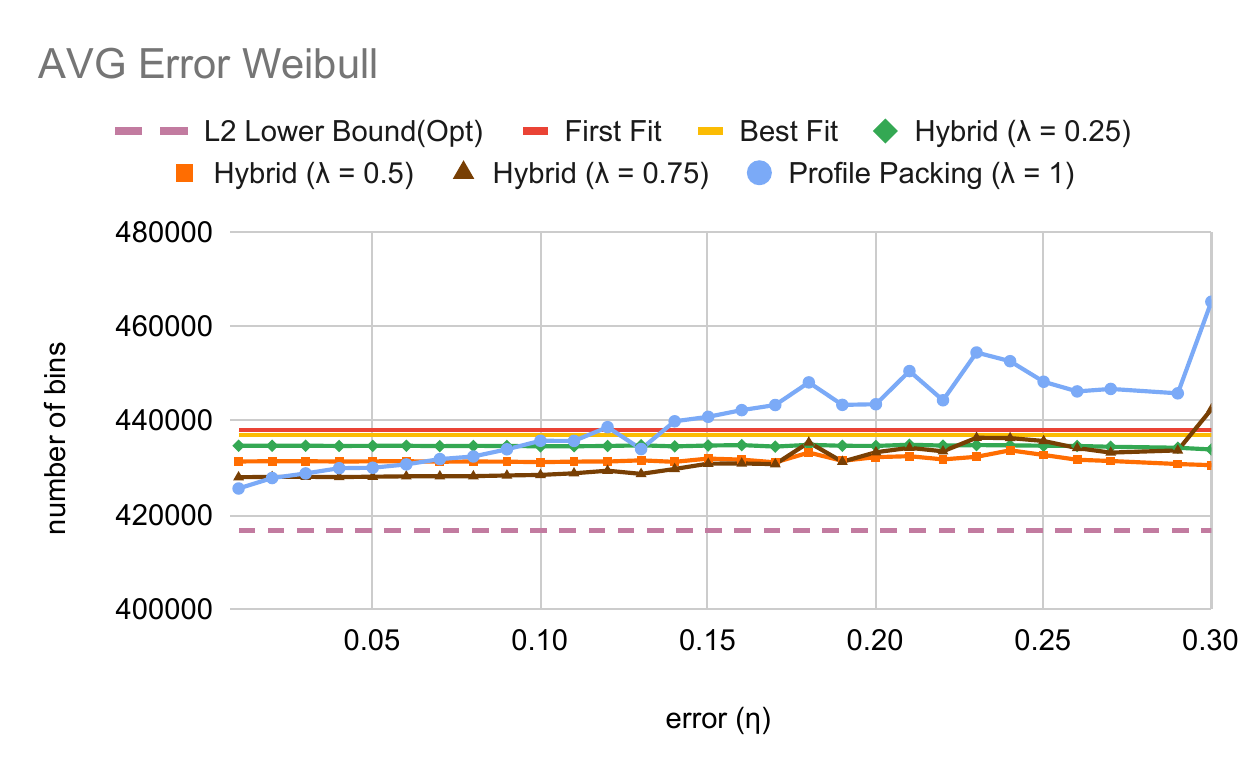}
\caption{\scriptsize GI Benchmark from BPPLIB.}
\label{fig:average.GI}
\end{subfigure}
\caption{Average number of bins vs. average error over twenty sequences. \label{fig:average} }
\end{figure*}

\subsection{Evolving Distributions}
\label{sec:dynamic}

In this section, we address the situation in which the input is not drawn according to a fixed distribution but instead is generated from distributions that change with time, e.g., when dealing with evolving data streams. This is a complex setting that has not been studied in any previous work on online bin packing, with or without predictions. 

We define a heuristic called \adaptive, in which predictions are updated dynamically using a {\em sliding window} approach; see e.g.~\cite{gomes2017survey}. 
\adaptive uses a parameter $w \in \mathbb{N^+}$ as follows. 
In the initial phase, \adaptive serves $\sigma[1,w]$ using \firstfit; moreover, at the end of this phase,
it computes $\bm{f_{\sigma[1,w]}}$, namely the frequency vector of all sizes in $\sigma[1,w]$. From this point onwards, the algorithm will serve items using \pp 
with predictions $\bm{f'}$ which are initialized to $\bm{f_{\sigma[1,w]}}$. Specifically, every time \adaptive encounters item $\sigma[i w]$, for $i \in \mathbb{N}^+$, it updates $\bm{f'}$ to
$\bm{f_{\sigma[(i-1)w+1,i w]}}$.

For the analysis, we use the benchmarks described in Section~\ref{subsec:benchmarks}.
The distribution of the input sequence changes every 50000 items. Namely, the input sequence is the concatenation of $n/50000$ subsequences. For Weibull benchmarks, each subsequence is a Weibull distribution, whose shape parameter is chosen uniformly at random from $[1.0,4.0]$. For BPPLIB benchmarks, each subsequence is generated by choosing a file uniformly at random, then generating $50000$ items uniformly at random from that specific file.

We evaluate \adaptive for 100 values of the sliding window $w$, equidistant in the range $[100,100000]$. This is a crucial parameter: if $w$ is too small, we do not obtain sufficient information on the frequencies, whereas if $w$ is too big, the predictions become ``stale''.

\begin{figure*}[!h]
\ \vspace*{.5cm} \\
\centering
\begin{subfigure}[b]{0.49\textwidth}   \includegraphics[width=1\linewidth,trim = 0mm 8mm 0mm 2.2cm, clip]{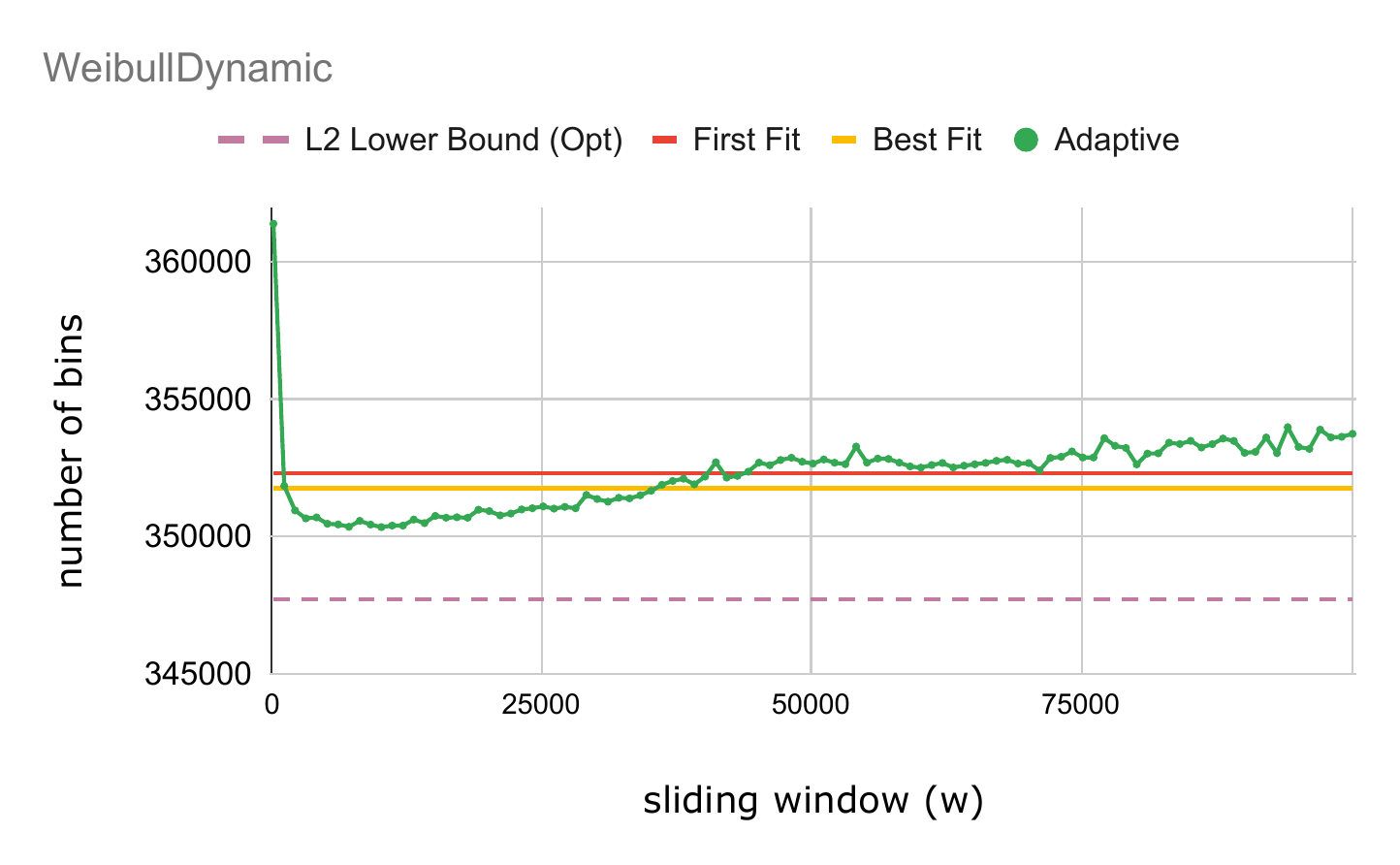}
\caption{\scriptsize  Weibull distribution.}
\label{fig:evolving.WB}
\end{subfigure}
\hfill 
\begin{subfigure}[b]{0.49\textwidth}    \includegraphics[width=1\linewidth, trim = 0mm 8mm 0mm 2cm, clip]{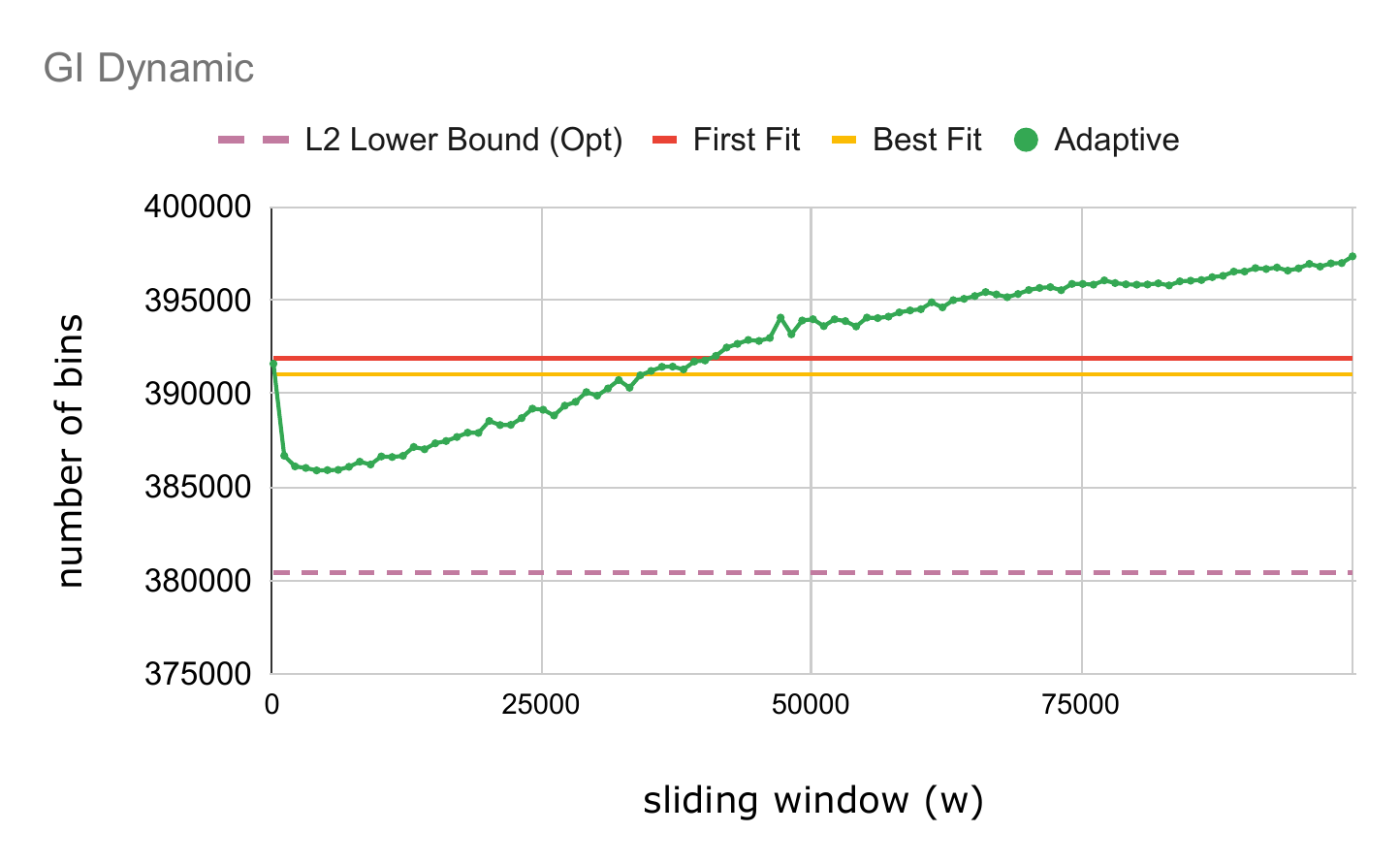}
\caption{\scriptsize GI benchmark from BPPLIB.}
\label{fig:evolving.GI}
\end{subfigure}\ \vspace*{.4cm} \\
\begin{subfigure}[b]{.49\textwidth}
\includegraphics[width=\linewidth,trim = 0mm 5mm 0mm 2cm, clip]{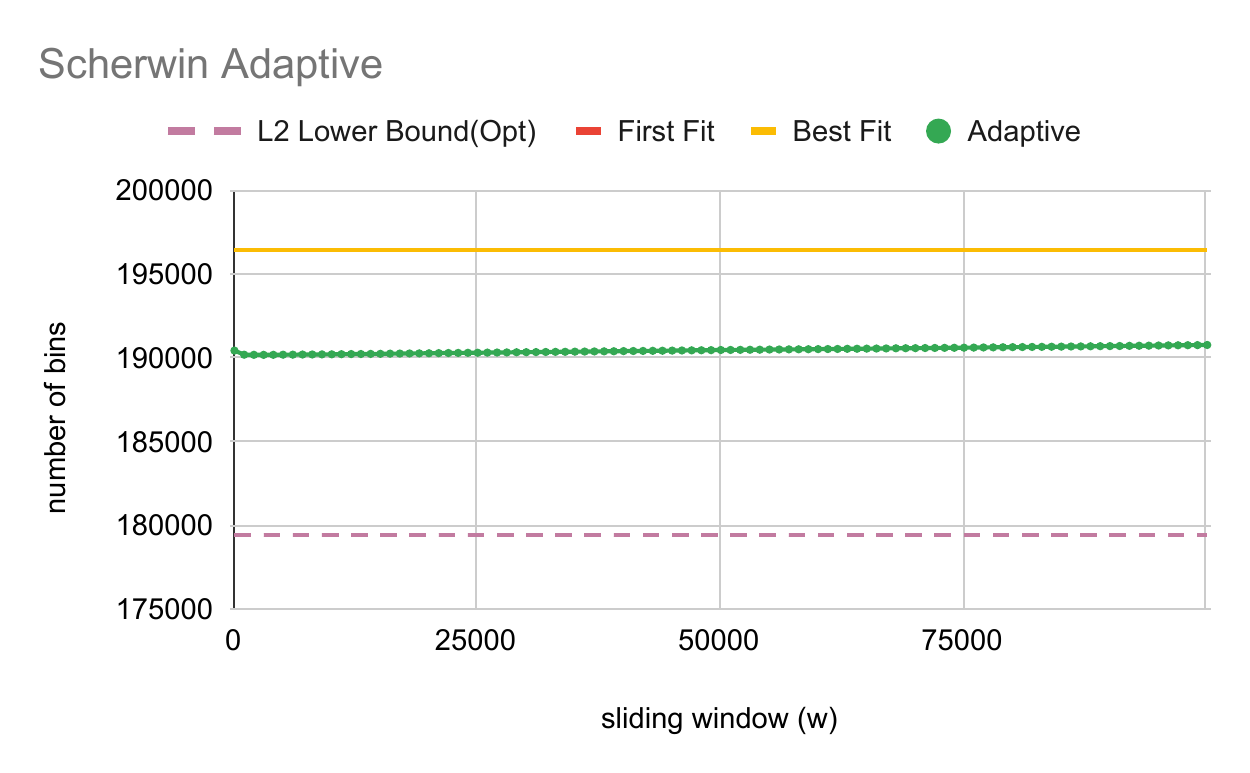} 
\caption{\scriptsize Shwerin benchmark from BPPLIB.}
\label{fig:dyn.Schwerin}
\end{subfigure}
\hfill 
\begin{subfigure}[b]{.49\textwidth}
\includegraphics[width=\linewidth, trim = 0mm 5mm 0mm 2cm, clip]{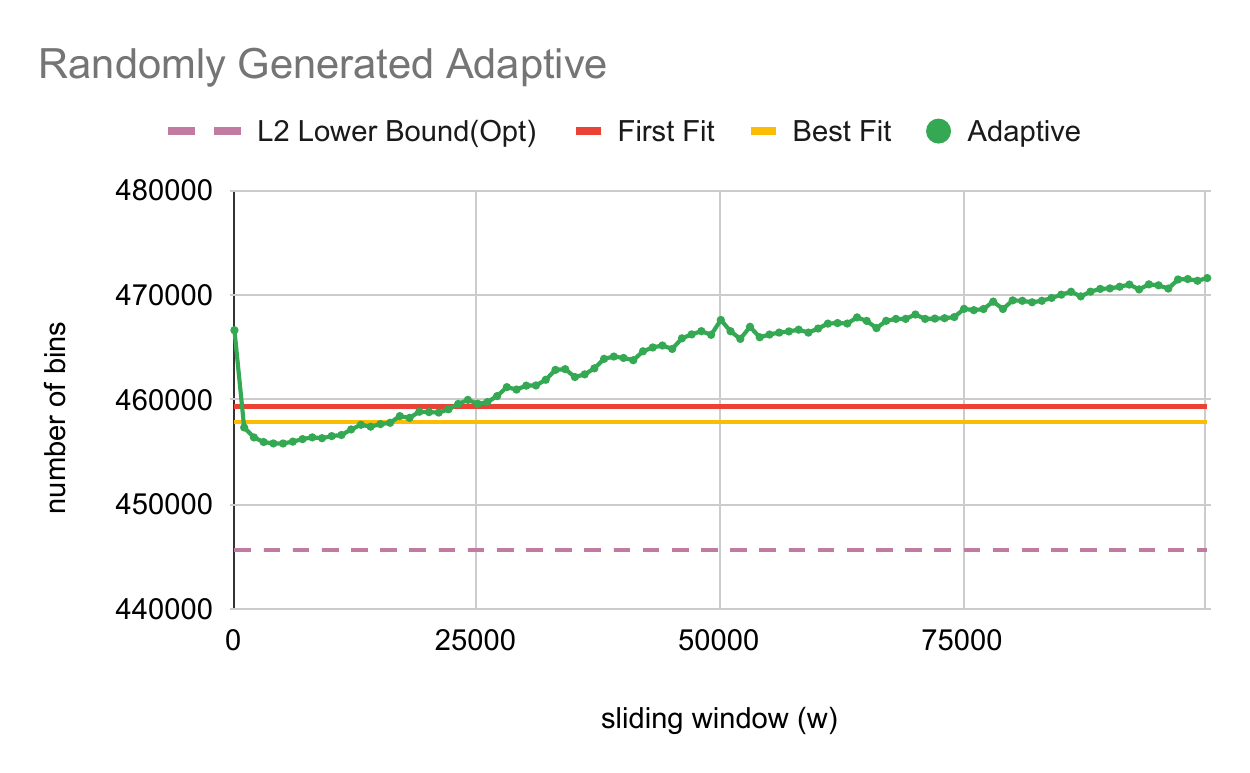}
\caption{\scriptsize Randomly\_Generated benchmark from BPPLIB.}
\label{fig:dyn.Rand}
\end{subfigure}
\ \vspace*{.4cm} \\
\begin{subfigure}[b]{.49\textwidth}
\includegraphics[width=\linewidth, trim = 0mm 5mm 0mm 2cm, clip]{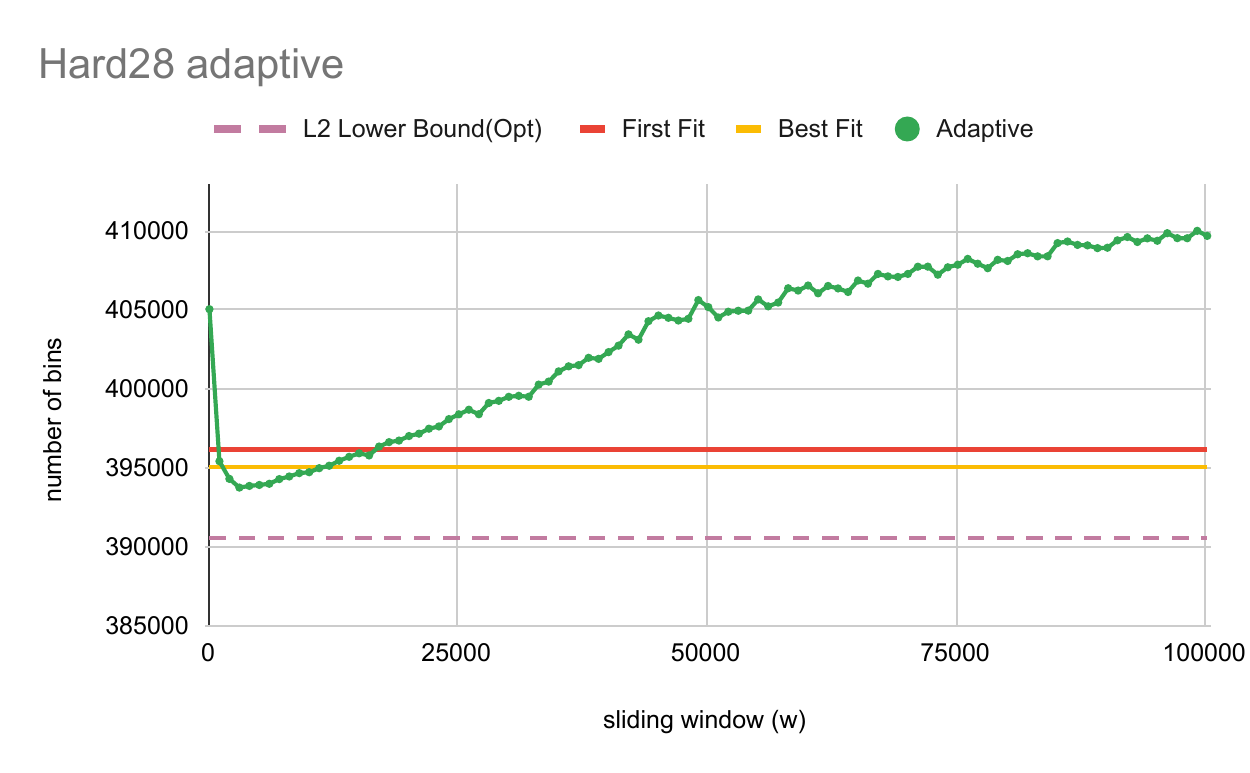}
\caption{\scriptsize Hard28 benchmark from BPPLIB.}
\label{fig:dyn.hard}
\end{subfigure}
\begin{subfigure}[b]{.49\textwidth}
\includegraphics[width=\linewidth, trim = 0mm 5mm 0mm 2cm, clip]{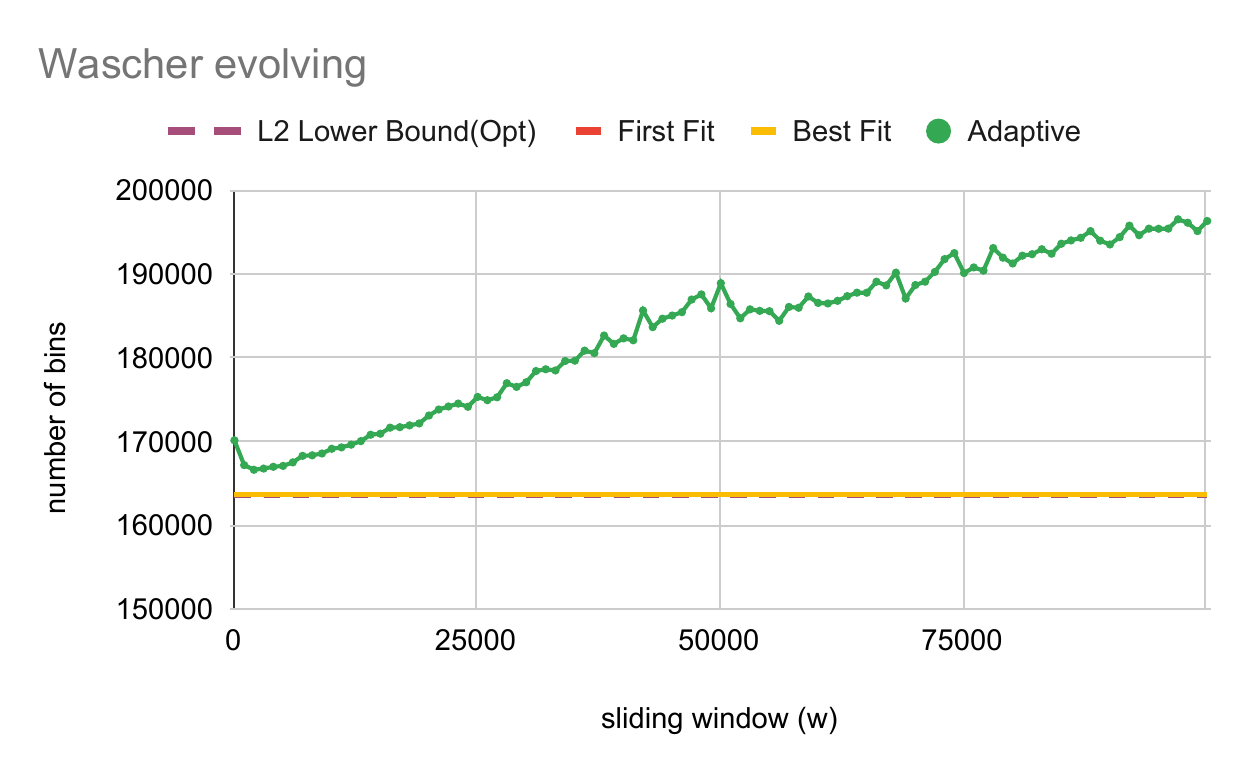}
\caption{\scriptsize W{\"a}scher benchmark from BPPLIB.}
\label{fig:dyn.Wascher}
\end{subfigure}
\caption{Number of opened bins for sequences from an evolving distribution. For the Shwerin benchmark, \firstfit and \bestfit open a similar number of bins and they practically coincide in the plot. \label{fig:evolving} }
\end{figure*}

Figure~\ref{fig:evolving} depicts the number of bins opened by \adaptive as a function of $w$ for different benchmarks. Here, we report the average cost of the algorithms over 20 randomly generated sequences.
We observe that for the Weibull and ``GI'' benchmarks, there is a relatively wide range for $w$ that
leads to performance improvement, in comparison to \firstfit and \bestfit, 
namely it suffices to choose $w \in [2100,25000]$. For ``Randomly\_Generated" and ``Schoenfield\_Hard28", the performance curve of \adaptive is similar to that on the GI benchmark, and 
\adaptive improves upon \firstfit and \bestfit when $w$ takes values in the shorter range [2000,4000].
For ``Schwerin", \adaptive always performs better, which can be explained by the discussion in Section~\ref{subsec:discussion}. For ``W{\"a}scher", \adaptive does not offer any advantage over 
\firstfit and \bestfit. However, these two baseline algorithms are remarkably close to the $L2$ lower bound, which means that they output essentially optimal packings for this benchmark, and which in turn leaves very little room for any potential improvement.


When \adaptive opens a new profile group, the predicted frequencies are updated based on the $w$ most recently packed items. These $w$ items follow a distribution that may have changed since the time a new profile group was opened. As such, the performance of \adaptive depends on the diversity of the distributions that form the benchmark. For example, for ``Schwerin", the distribution does not evolve drastically, which explains why \adaptive performs consistently better than \firstfit and \bestfit, unlike the ``W{\"a}scher" benchmark.

\section{Conclusion}
\label{sec:conclusion}

We gave the first results on the competitive analysis of online bin packing, in a setting in which the algorithm has access to learnable predictions concerning the size frequencies. 
Our approach exploits the concept of profile packing, which can be applicable in more generalized packing problems, such as two-dimensional setting studied by~\citeA{chung1982packing} and~\citeA{HuangK13} and three-dimensional setting studied by~\citeA{DBLP:conf/aaai/ZhaoS0Y021} 
and, more generally, in 
{\em vector bin packing} studied by~\citeA{azar2013tight}. These are well-studied extensions of the basic online bin packing problem, with many applications in transportation logistics and cloud computing. In these problems, a main challenge will be to leverage, or develop new offline heuristics for computing the profile packing, since the profile size increases exponentially with the dimension.

Another class of problems for which the approach may be useful is the class of {\em multicontainer} packing problems, such as multiple knapsack, bin covering, and min-cost covering. For this class of problems,~\citeA{fukunaga2007bin} gave efficient {\em bin completion} offline algorithms that can be very useful towards the design of profile-based online algorithms. Last, a further direction for future work on bin packing problems is to incorporate a {\em distributional} model of predictions, as studied by~\citeA{DBLP:conf/icml/DiakonikolasKTV21}, in which the prediction is given as the cumulative distribution function of the item size distribution. 


\section*{Acknowledgements}

This research was supported by the projects PREDICTIONS, grant ANR-23-CE48-0010, and ALGORIDAM, grant ANR-19-CE48-0016 from the French National Agency (ANR).
We acknowledge the support of the Natural Sciences and Engineering Research Council of Canada (NSERC)
[funding reference number DGECR-2018-00059].

\bibliographystyle{theapa}
\bibliography{refs-Kimia}

%
%
%
%

%
%
%

\end{document}